\newtheorem{theorem}{Theorem}
\newtheorem{lemma}[theorem]{Lemma}
\newtheorem{definition}{Definition}
\newtheorem{example}{Example}
\begin{document}
\title{Peer Effects and Stability in Matching Markets}
\author{Elizabeth Bodine-Baron \and Christina Lee \and Anthony Chong \and Babak Hassibi \and Adam Wierman\\
California Institute of Technology\\ Pasadena, CA 91125, USA\\
\{eabodine, chlee, anthony, hassibi, adamw\}@caltech.edu \thanks{This work was supported in part by the National Science Foundation under grants CCF-0729203, CNS-0932428 and CCF-1018927, by the Office of Naval Research under the MURI grant N00014-08-1-0747, and by Caltech's Lee Center for Advanced Networking.}
} 
\maketitle

\begin{abstract}

Many-to-one matching markets exist in numerous different forms, such as college admissions, matching medical interns to hospitals for residencies, assigning housing to college students, and the classic firms and workers market.  In all these markets, externalities such as complementarities and peer effects severely complicate the preference ordering of each agent.  Further, research has shown that externalities lead to serious problems for market stability and for developing efficient algorithms to find stable matchings.  In this paper we make the observation that peer effects are often the result of underlying social connections, and we explore a formulation of the many-to-one matching market where peer effects are derived from an underlying social network. The key feature of our model is that it captures peer effects and complementarities using utility functions, rather than traditional preference ordering. With this model and considering a weaker notion of stability, namely two-sided exchange stability, we prove that stable matchings always exist and characterize the set of stable matchings in terms of social welfare.  We also give distributed algorithms that are guaranteed to converge to a two-sided exchange stable matching.  To assess the competitive ratio of  these algorithms and to more generally characterize the efficiency of matching markets with externalities, we provide general bounds on how far the welfare of the worst-case stable matching can be from the welfare of the optimal matching, and find that the structure of the social network (e.g. how well clustered the network is) plays a large role.  
\end{abstract}

\section{Introduction}

Many-to-one matching markets exist in numerous forms, such as college admissions, the national medical residency program, freshman housing assignment, as well as the classic firms-and-workers market.  These markets are widely studied in academia and also widely deployed in practice, and have been applied to other areas, such as FCC spectrum allocation and supply chain networks \cite{BajariFox09,Ostrovsky08}

In the conventional formulation, matching markets consist of two sets of agents, such as medical interns and hospitals, each of which have preferences over the agents to which they are matched.  In such settings it is important that matchings are `stable' in the sense that agents do not have incentive to change assignments after being matched.  The seminal paper on matching markets was by Gale and Shapley \cite{GaleShapley62}, and following this work an enormous literature has grown, e.g., \cite{KojimaPathak09,RothRothblumVandeVate93,matching_book,RothVandeVate90} and the references therein.  Further, variations on Gale and Shapley's original algorithm for finding a stable matching are in use by the National Resident Matching Program (NRMP), which matches medical school graduates to residency positions at hospitals \cite{Roth84}.

However, there are problems with many of the applications of matching markets in practice.  For example, couples participating in the NRMP often reject their matches and search outside the system.   In housing assignment markets where college students are asked to list their preferences over housing options, there  is often collusion among friends to list the same preference order for houses.  These two examples highlight that `peer effects', whether just couples or a more general set of friends, often play a significant role in many-to-one matchings.  That is, agents care not only where they are matched, but also which other agents are matched to the same place.  Similarly, `complementarities' often play a role on the other side of the market.  For example, hospitals and colleges care not only about which individual students are assigned to them, but also that the group has a certain diversity, e.g., of different specializations.

As a result of the issues highlighted above, there is a growing literature studying many-to-one matchings with externalities (i.e., peer effects and complementarities) \cite{Dutta97,Hafalir08,KlausKlijn05,KlausKlijn07,Pycia07,Revilla04,YarivWilson09,EcheniqueYenmez07,SasakiToda96} and the research has found that designing matching mechanisms is significantly more challenging when externalities are considered, e.g. incentive compatible mechanism design is no longer possible.  In fact, even determining the existence of stable matchings in the presence of externalities has been difficult.

The reason for the difficulty is that there is no longer a guarantee that a stable many-to-one matching will exist when agents care about more than their own matching \cite{Roth84,matching_book}, and, if a stable matching does exist, it can be computationally difficult to find \cite{Ronn90}. Consequently, most research has focused on identifying when stable matchings do and do not exist.  Papers have proceeded by constraining the matching problem through restrictions of the possible preference orderings, \cite{Dutta97,Hafalir08,KlausKlijn05,KlausKlijn07,Pycia07,Revilla04}, and by considered variations on the standard notion of stability \cite{YarivWilson09,EcheniqueYenmez07,SasakiToda96}.  Our paper also considers a restriction of the model, described in the following.

The key idea of this paper is that \emph{peer effects are often the result of an underlying social network}.  That is, when agents care about where other agents are matched, it is often because they are friends.  With this in mind, we construct a model in Section \ref{section:model} that includes a weighted, undirected social network graph and allows agents to have utilities (which implicitly defines their preference ordering) that depend on where neighbors in the graph are assigned.  The model is motivated by \cite{YarivWilson09}, which also considers peer effects defined by a social network but focuses on one-sided matching markets rather than two-sided matching markets.  

We focus on \emph{two-sided exchange-stable} matchings -- see Section \ref{section:model} for a detailed definition.  We note that compared to the traditional notion of stability of \cite{GaleShapley62}, this is a distinct notion of stability, but one that is relevant to many situations where agents can compare notes with each other, such as the housing assignment or medical matching problem.  For example, in \cite{YarivWilson09,BajariFox09,Fox09}, ``pairwise-stability'' is considered since they consider models where agents exchange offices or licenses in FCC spectrum auctions.  Further, consider a situation where two hospital interns prefer to exchange the hospitals allocated to them by the NRMP.  If this is a traditional stable matching, the hospitals would not allow the swap, even though the interns are highly unsatisfied with the match.  Such a situation has been documented in \cite{Irving08}, and has led to a similar type of stability, exchange stability, as defined in \cite{Alcalde94,Cechlarova02,Cechlarova05,Irving08}.

Given our model of peer effects, the focus of the paper is then on characterizing the set of two-sided exchange-stable matchings.  Our results concern (i) the existence of two-sided exchange-stable matchings, (ii) algorithms for finding two-sided exchange-stable matchings, and (iii) the efficiency of exchange-stable matchings (in terms of social welfare).

With respect to the existence of stable matchings (Section \ref{sec:existence}), it is not difficult to show that in our model two-sided exchange-stable matchings always exist. Further, if students value houses according to the same rules, the matching that maximizes social welfare is guaranteed to be two-sided exchange-stable.  Given the contrast to the negative results that are common for many-to-one matchings, e.g., \cite{EcheniqueYenmez07,Ronn90,Roth84}, these results are perhaps surprising. Further, the results on characterizing the existence of stable matchings naturally suggest two simple algorithms for finding stable matchings, which we discuss in Section \ref{sec:sim}.

With respect to the efficiency of exchange-stable matchings (Section \ref{sec:PoA}), results are not as easy to obtain.  In this context, we limit our focus to one-sided matching markets, but as a result we are able to attain bounds on the ratio of the welfare of the optimal matching to that of the worst stable matching, i.e., the `price of anarchy'.   We also demonstrate cases where our bounds are tight.  When considering only one-sided markets, our model becomes similar to hedonic coalition formation, but with several key differences, as highlighted in Section \ref{sec:PoA}.  Our results (Theorems \ref{thm:PoASimple} and \ref{thm:PoAGeneral}) show that the price of anarchy does not depend on the number of, say, interns, but does grow with the number of, say, hospitals -- though the growth is typically sublinear.  Further, we observe that the impact of the structure of the social network on the price of anarchy happens only through the clustering of the network, which is well understood in the context of social networks, e.g., \cite{Jackson08,WassermanFaust94}.  Finally, it turns out that the price of anarchy has a dual interpretation in our context; in addition to providing a bound on the inefficiency caused by enforcing exchange-stability, it turns out to also provide a bound on the loss of efficiency due to peer effects. 
\section{Model and notation} \label{section:model}

To begin, we define the model we use to study many-to-one matchings with peer effects and complementarities.  There are four components to the model, which we describe in turn: (i) basic notation for discussing matchings; (ii) the model for agent utilities, which captures both peer effects and complementarities; (iii) the notion of stability we consider; and (iv) the notion of social welfare we consider.

To provide a consistent language for discussing many-to-one matchings, throughout this paper we use the setting of matching incoming undergraduate students to residential houses.  In this setting many students are matched to each house, and the students have preferences over the houses, but also have peer effects as a result of wanting to be matched to the same house as their friends.  Similarly, the houses have preferences over the students, but there are additional complementarities due to goals such as maintaining diversity.  It is clear that some form of stability is a key goal of this ``housing assignment'' problem.

\paragraph{Notation for many-to-one matchings}
Using the language of the housing assignment problem, we define two finite and disjoint sets, $H=\{h_1, \dots, h_m\}$ and $S=\{s_1, \dots, s_n\}$ denoting the houses and students, respectively.  For each house, there exists a positive integer \emph{quota} $q_h$ which indicates the number of positions a house has to offer.  The quota for each house may be different.

A matching $\mu$ describes the assignment of students to houses such that students are matched to only one house, while houses are matched to multiple students. More formally:
\begin{definition} A \emph{matching} is a subset $\mu \subseteq S \times H$ such that $|\mu(s)| = 1$ and $|\mu(h)| = q_h$, where $\mu(s) = \{ h \in H: (s,h) \in \mu\}$ and $\mu(h) = \{ s \in S: (s,h) \in \mu\}$.\footnote{If the number of students in $\mu(h)$, say $r$, is less than $q_h$, then $\mu(h)$ contains $q_h - r$ ``holes'' -- represented as students with no friends and no preference over houses.}
\end{definition}
Note that we use $\mu^{2}(s)$ to denote the set of student $s$'s housemates (students also in house $\mu(s)$).

\paragraph{Friendship network}
The friendship network among the students is modeled by a weighted graph, $G=(V,E,w)$ where $V=S$ and the relationships between students are represented by the weights of the edges connecting nodes.  The strength of a relationship between two students $s$ and $t$ is represented by the weight of that edge, denoted by $w(s,t) \in \mathbb{R}^+\cup\{0\}$. We require that the graph is undirected, i.e., the adjacency matrix is symmetric so that $w(s,t) = w(t,s)$ for all $s,t$.

Additionally, we define a few metrics quantifying the graph structure and its role in the matching.  Let the total weight of the graph be denoted by $|E| :=  \frac{1}{2} \sum_{s \in S} \sum_{t \in S} w(s,t)$.  Further, let the weight of edges connecting houses $h$ and $g$ under matching $\mu$ be denoted by  $E_{hg}(\mu) := \sum_{s \in \mu(h)} \sum_{t \in \mu(g)} w(s,t)$. Note that in the case of edges within the same house $E_{hh}(\mu) := \frac{1}{2}\sum_{s \in \mu(h)} \sum_{t \in \mu(h)} w(s,t)$. Finally, let the weight of edges that are within the houses of a particular matching $\mu$ be denoted by $E_{in}(\mu) := \sum_{h\in H} E_{hh}(\mu).$

\paragraph{Agent utility functions}
In our model, each agent derives some utility from a particular matching and an agent (student or house) always strictly prefers matchings that give a strictly higher utility and is indifferent between matchings that give equal utility. This setup differs from the traditional notion of `preference orderings' \cite{GaleShapley62,matching_book},  but is not uncommon \cite{AnsheDas09,YarivWilson09,BajariFox09,BranzeiLarson09,Fox09}.   It is through the definitions of the utility functions that we model peer effects (for students) and complementarities (for houses).

Under our model, students derive benefit both from (i) the house they are assigned to and (ii) their peers that are assigned to the same house. We model each house $h$ as having an desirability of $D^s_h \in \mathbb{R}^+\cup\{0\}$ for student $s$. A similar model was first used in \cite{YarivWilson09} and is meant to capture the physical characteristics of the house (amenities, size, etc.) with $D^s_h$ as well as peer effects.  If $D^s_h = D^t_h \; \forall s\neq t$ (objective desirability), this value can be seen as representing something like the U.S. News college rankings or hospital rankings -- something that all students would agree on.  This leads to a utility for student $s$ under matching $\mu$ of
\begin{equation}
U_s(\mu) := D^s_{\mu(s)} + \sum_{t\in\mu^2(s)}w(s,t)
\end{equation}
where $w(s,t)$ is the weight of the edge between $s$ and $t$ in the friendship graph and $D^s_{h}$ is utility derived by student $s$ for house $h$, so that the total utility that a student derives from a match is a combination of how ``good'' a house is as well as how many friends they will have in that house.\footnote{We note that the utility of any ``holes'' (such as what happens when a house's quota is not met), is simply $U_s(\mu) = 0$.}\footnote{Note also that if we remove $D_h$ from the utility function and allow unlimited quotas, the matching problem becomes the coalitional affinity game from \cite{BranzeiLarson09}.}   

Similarly, the utility of a house $h$ under matching $\mu$ is modeled by
\begin{equation} U_h(\mu) := D_{\mu(h)}^h, \end{equation}
where $D_{\sigma}^h$ denotes the desirability of a particular set of students $\sigma$ for house $h$ (the utility house $h$ derives from being matched to the set of students $\sigma$).  Note that this definition of utility allows general phenomena such as heterogeneous house preferences over groups of students.

\paragraph{Two-sided exchange stability}
Under the traditional definition of stability, if a student and a house were to prefer each other to their current match (forming a blocking pair), the student is free to move to the preferred house and the house is free to evict (if necessary) another student to make space for the preferred student.  In our model, however, we assume that students and houses cannot ``go outside the system''  and leave the university (neither can students remain unmatched), like what medical students and hospitals do when they operate outside of the NRMP.  As a result, we restrict ourselves to considering swaps of students between houses, similar to \cite{YarivWilson09,BajariFox09,Fox09}.

To define exchange stability, it is convenient to first define a \emph{swap matching} $\mu_s^{t}$ in which students $s$ and $t$ switch places while keeping all other students' assignments the same.
\begin{definition}
A \textbf{swap matching} $\mu_s^t = \{\mu \setminus \{(s,h),(t,g)\} \} \cup \{(s,g),(t,h)\}.$
\end{definition}

Note that the agents directly involved in the swap are the two students switching places and their respective houses -- all other matchings remain the same.  Further, one of the students involved in the swap can be a ``hole'' representing an open spot, thus allowing for single students moving to available vacancies.  When two actual students are involved, this type of swap is a two-sided version of the ``exchange'' considered in \cite{Alcalde94,Cechlarova02,Cechlarova05,Irving08} -- \emph{two-sided} exchange stability requires that houses approve the swap.  As a result, while an exchange-stable matching may not exist in either the marriage or roommate problem, we show in Section \ref{sec:existence} that a two-sided exchange-stable matching will always exist for the housing assignment problem.

\begin{definition} \label{psDef_orig} A matching $\mu$ is \textbf{two-sided exchange-stable (2ES)} if and only if there does not exist a pair of students $(s,t)$ such that:
\begin{enumerate}[(i)]
\item $\forall\;i \in \{s,t,\mu(s),\mu(t)\}$, $U_i(\mu_s^t)\geq U_i(\mu)$ and
\item $\exists\;i \in \{s,t,\mu(s),\mu(t)\}$ such that $U_i(\mu_s^t) > U_i(\mu)$
\end{enumerate}
\end{definition}

This definition implies that a swap matching in which all agents involved are indifferent is two-sided exchange-stable.  This avoids looping between equivalent matchings.  Note that the above definition implies that if two students want to switch between two houses (or a single student wants to ``switch'' with a hole), the houses involved must ``approve'' the swap or if two houses want to switch two students, the students involved must agree to the swap (a hole will always be indifferent).  This is natural for the house assignment problem and many other many-to-one matching markets, but would be less appropriate for some other settings, such as the college-admissions model.

\paragraph{Social welfare}
One key focus of this paper is to develop an understanding of the ``efficiency loss'' that results from enforcing stability of assignments in matching markets.  We measure the efficiency loss in terms of the ``social welfare'', which we define as follows:
\begin{equation*} 
W(\mu) := \sum_{s\in S}U_s(\mu) + \sum_{h\in H}U_h(\mu)
\end{equation*}

Using this definition of social welfare, the efficiency loss can be quantified using the \emph{Price of Anarchy} (PoA) and \emph{Price of Stability} (PoS).  Specifically, the PoA (PoS) is the ratio of the optimal social welfare over all matchings, not necessarily stable, to the minimum (maximum) social welfare over all stable matchings.  Understanding the PoA and PoS is the focus of Section \ref{sec:PoA}.
\section{Existence of stable matchings } \label{sec:existence}

We begin by focusing on the existence of two-sided exchange stable matchings.   In most prior work, matching markets with externalities do not have guaranteed existence of a stable matching.  For example, in the presence of couples on the resident side of the hospital matching market, the NRMP algorithm may fail to have a stable outcome \cite{Roth84,matching_book}, and even if a stable matching does exist, it may be NP-hard to find \cite{Ronn90}.

In contrast to the prior literature discussed above, we prove that a two-sided exchange stable matching always exists in the model considered in this paper. We begin by proposing a potential function $\Phi(\mu)$ for the matching game:\begin{equation} \label{potFn}
\Phi(\mu) = \sum_{h \in H} U_h(\mu) + \sum_{s \in S} D^s_{\mu(s)} + \frac{1}{2} \sum_{s \in S} \left( \sum_{x \in \mu^2(s)} w(s,x) \right) \end{equation}

Due to the symmetry of the social network, every approved swap will result in a strict increase of the potential function. Specifically, 
\begin{lemma} \label{lem:existencehelper_general} Any swap matching $\mu_s^t$ for which
\begin{enumerate}[(i)]
\item $\forall \: i \in \{s,t,\mu(s),\mu(t)\}$, $U_i(\mu_s^t)\geq U_i(\mu)$, and
\item $\exists \: i \in \{s,t,\mu(s),\mu(t)\}$ with $U_i(\mu_s^t) > U_i(\mu)$
\end{enumerate}
has $\Phi(\mu_s^t) > \Phi(\mu)$.  \end{lemma}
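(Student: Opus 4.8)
The plan is to show that the potential function $\Phi$ acts as an exact potential for the swap dynamics, in the sense that the change $\Phi(\mu_s^t) - \Phi(\mu)$ equals the sum of the utility changes of the four involved agents $\{s, t, \mu(s), \mu(t)\}$. Once this identity is established, the conclusion is immediate: conditions (i) and (ii) say that every term in this sum is nonnegative and at least one is strictly positive, so the total is strictly positive, giving $\Phi(\mu_s^t) > \Phi(\mu)$.

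First I would compute $\Phi(\mu_s^t) - \Phi(\mu)$ directly by comparing the three groups of terms in \eqref{potFn}. The house-utility terms $\sum_h U_h$ change only through the two houses $\mu(s) = h$ and $\mu(t) = g$, contributing exactly $\big(U_h(\mu_s^t) - U_h(\mu)\big) + \big(U_g(\mu_s^t) - U_g(\mu)\big)$. The desirability terms $\sum_s D^s_{\mu(s)}$ change only for students $s$ and $t$ (everyone else stays put), contributing $\big(D^s_{g} - D^s_{h}\big) + \big(D^t_{h} - D^t_{g}\big)$, which is precisely the desirability part of $U_s(\mu_s^t) - U_s(\mu) + U_t(\mu_s^t) - U_t(\mu)$. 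The crux is the third, friendship term.

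The key step, and the main obstacle, is handling the factor of $\frac{1}{2}$ in the friendship sum correctly. The term $\frac{1}{2}\sum_{s'}\sum_{x \in \mu^2(s')} w(s',x)$ counts each within-house edge exactly once after the $\frac12$ is applied, so it equals $E_{in}(\mu)$. When $s$ and $t$ swap, the peer-effect utility $\sum_{x \in \mu^2(s)} w(s,x)$ of each \emph{moving} student changes, but so do the peer utilities of all their former and new housemates, since friendship is mutual. The factor of $\frac{1}{2}$ is exactly what reconciles these double-counted contributions: I would argue that the change in the friendship term equals $\frac{1}{2}$ times the change in $\sum_{t' \in \mu^2(s)} w(s,t')$ summed over the two movers, \emph{plus} the reciprocal changes felt by the stationary housemates, and that by symmetry $w(s,x) = w(x,s)$ these combine so that the net change in the global friendship term equals exactly $\big(\sum_{x \in (\mu_s^t)^2(s)} w(s,x) - \sum_{x \in \mu^2(s)} w(s,x)\big) + \big(\sum_{x \in (\mu_s^t)^2(t)} w(t,x) - \sum_{x \in \mu^2(t)} w(t,x)\big)$, i.e., the friendship part of $U_s$'s and $U_t$'s utility changes with no leftover factor. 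One subtlety I would check carefully is the edge $w(s,t)$ between the two swapping students: if $s$ and $t$ start in different houses and end in different houses (the generic swap), this edge contributes to neither configuration's within-house count in a way that survives, but I must verify it cancels cleanly and does not create a spurious term.

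Finally I would assemble the pieces: the house terms give the $U_{\mu(s)}$ and $U_{\mu(t)}$ utility differences, and the desirability plus friendship terms reassemble into exactly $\big(U_s(\mu_s^t) - U_s(\mu)\big) + \big(U_t(\mu_s^t) - U_t(\mu)\big)$, yielding the clean identity
\begin{equation*}
\Phi(\mu_s^t) - \Phi(\mu) = \sum_{i \in \{s,t,\mu(s),\mu(t)\}} \big(U_i(\mu_s^t) - U_i(\mu)\big).
\end{equation*}
Conditions (i) and (ii) then force the right-hand side to be strictly positive, completing the proof. The only real work is the bookkeeping in the friendship term, so I would state that computation explicitly and lean on the symmetry $w(s,t) = w(t,s)$ as the property that makes the $\frac{1}{2}$ work out.
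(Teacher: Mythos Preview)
Your proposal is correct and follows essentially the same approach as the paper: both compute $\Phi(\mu_s^t)-\Phi(\mu)$ term-by-term (house utilities, desirabilities, and the halved friendship sum), use the symmetry $w(s,t)=w(t,s)$ to collapse the friendship bookkeeping, and arrive at the identity $\Phi(\mu_s^t)-\Phi(\mu)=\sum_{i\in\{s,t,\mu(s),\mu(t)\}}\big(U_i(\mu_s^t)-U_i(\mu)\big)$, from which (i) and (ii) immediately give strict positivity. The paper phrases the final step via a WLOG case analysis rather than your direct sum argument, but the substance is identical.
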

Detailed proofs of this lemma and the following results are shown in Appendix \ref{app:existence}.  Expanding on this idea, it is easy to prove the following theorem.
\begin{theorem} \label{thm:stable_general} All local maxima of $\Phi(\mu)$ are two-sided exchange stable. \end{theorem}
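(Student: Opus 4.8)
The plan is to prove Theorem~\ref{thm:stable_general} by a straightforward contrapositive argument that leverages Lemma~\ref{lem:existencehelper_general}. The statement asserts that every local maximum of the potential function $\Phi$ is two-sided exchange stable; the natural strategy is to show that any matching that is \emph{not} 2ES cannot be a local maximum of $\Phi$. The key observation is that the notion of ``local'' for the matching game is defined precisely in terms of the reachable neighbors under single swaps: a matching $\mu$ is a local maximum of $\Phi$ if $\Phi(\mu) \geq \Phi(\mu_s^t)$ for every pair of students $s,t$, since the swap matchings $\mu_s^t$ are exactly the configurations adjacent to $\mu$ in the game's move graph.

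First I would suppose, for contradiction, that $\mu$ is a local maximum of $\Phi$ but is not two-sided exchange stable. By the negation of Definition~\ref{psDef_orig}, the failure of 2ES means there \emph{does} exist a pair of students $(s,t)$ satisfying conditions (i) and (ii) of that definition --- namely, a swap $\mu_s^t$ that is weakly preferred by all four involved agents $\{s,t,\mu(s),\mu(t)\}$ and strictly preferred by at least one. These are exactly the two hypotheses of Lemma~\ref{lem:existencehelper_general}. I would then simply invoke the lemma to conclude that $\Phi(\mu_s^t) > \Phi(\mu)$ for this particular swap.

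This inequality is the contradiction: since $\mu_s^t$ is a swap-neighbor of $\mu$ and has strictly larger potential, $\mu$ cannot be a local maximum of $\Phi$. Hence no local maximum can fail to be 2ES, which is precisely the theorem. The essential point to make explicit is that ``local maximum'' is taken with respect to the neighborhood structure induced by admissible swaps, so that the existence of an improving swap directly witnesses the failure of local maximality.

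The argument is essentially a one-line deduction once Lemma~\ref{lem:existencehelper_general} is in hand, so I do not anticipate a genuine obstacle; the real work lies in the lemma, whose proof I am permitted to assume. The only subtlety worth flagging in the writeup is the alignment between the two hypotheses: the conditions defining a \emph{blocking swap} in Definition~\ref{psDef_orig} are verbatim the conditions (i) and (ii) in the hypothesis of Lemma~\ref{lem:existencehelper_general}, so the negation of 2ES feeds directly into the lemma with no gap. I would close by remarking that this, combined with the fact that $\Phi$ takes finitely many values (the set of matchings is finite) and therefore attains a global maximum, immediately yields existence of a 2ES matching --- which is the ultimate goal of the section.
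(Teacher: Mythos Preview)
Your proposal is correct and matches the paper's own proof essentially verbatim: assume a local maximum $\mu$ is not 2ES, extract a blocking swap, apply Lemma~\ref{lem:existencehelper_general} to get $\Phi(\mu_s^t) > \Phi(\mu)$, and contradict local maximality. The paper's proof is even terser than yours, and your added remarks about the alignment of Definition~\ref{psDef_orig} with the lemma's hypotheses and about finiteness yielding existence are exactly what the paper relies on (the latter appearing immediately after the theorem rather than inside the proof).
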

As there is a finite set of matches, this results in the existence of a two-sided exchange stable matching for every housing assignment market. 

If we assume that there are no vacancies in any of the houses and students value houses according to the same rules (i.e., $D_h^s = D_h^t \; \forall \; s \neq t$), then each each approved swap will result in a strict increase in the \emph{social welfare}.  Specifically,
\begin{lemma} \label{lem:existencehelper} If house quotas are exactly met and $D_h^s = D_h^t \; \forall \; s \neq t$, then any swap matching $\mu_s^t$ for which
\begin{enumerate}[(i)]
\item $\forall \: i \in \{s,t,\mu(s),\mu(t)\}$, $U_i(\mu_s^t)\geq U_i(\mu)$, and
\item $\exists \: i \in \{s,t,\mu(s),\mu(t)\}$ with $U_i(\mu_s^t) > U_i(\mu)$
\end{enumerate}
has $W(\mu_s^t) > W(\mu)$.  \end{lemma}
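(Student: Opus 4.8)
The plan is to compute the welfare difference $W(\mu_s^t) - W(\mu)$ directly and exhibit it as a positive combination of the four quantities that conditions (i)--(ii) control. Write $h = \mu(s)$ and $g = \mu(t)$; if $h = g$ the swap is trivial and (ii) cannot hold, so assume $h \neq g$. Let $A = \mu(h) \setminus \{s\}$ and $B = \mu(g) \setminus \{t\}$ denote the ``bystander'' housemates who do not move. Because the quotas are exactly met there are no holes, so every agent appearing below is a genuine student and the hypothesis $D_h^s = D_h^t$ applies to all of them; write $D_h$ for this common value. For brevity set $\Delta U_i := U_i(\mu_s^t) - U_i(\mu)$.

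First I would identify the only agents whose utility changes: the movers $s$ and $t$; the houses $h$ and $g$; and the bystanders in $A$ and $B$, whose housemate set gains one of $s,t$ and loses the other. Everyone else is untouched. Then I would record the elementary changes. A bystander $a \in A$ stays in $h$, so $\Delta U_a = w(a,t) - w(a,s)$, and similarly $\Delta U_b = w(b,s) - w(b,t)$ for $b \in B$; their desirability terms are unchanged. For the movers, $\Delta U_s = (D_g - D_h) + \sum_{b \in B} w(s,b) - \sum_{a \in A} w(s,a)$ and $\Delta U_t = (D_h - D_g) + \sum_{a \in A} w(t,a) - \sum_{b \in B} w(t,b)$, so the desirability contributions $D_g - D_h$ and $D_h - D_g$ cancel exactly. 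This cancellation is the single place objective desirability is used, and it is also why holes must be excluded: a hole carries desirability $0$ and would break it. Note that $w(s,t)$ never appears, since $s$ and $t$ lie in different houses both before and after the swap.

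The crux is an algebraic identity forced by the symmetry $w(x,y) = w(y,x)$. Summing the bystander changes and comparing with $\Delta U_s + \Delta U_t$, each edge weight from a mover to a bystander is counted once in the mover's term and once in that bystander's term, so the total student-welfare change is exactly double the movers' combined change:
\begin{equation*}
\sum_{i \in S} \Delta U_i = 2\left( \Delta U_s + \Delta U_t \right).
\end{equation*}
Adding the two house terms gives $W(\mu_s^t) - W(\mu) = 2\,\Delta U_s + 2\,\Delta U_t + \Delta U_h + \Delta U_g$. By condition (i) all four of $\Delta U_s, \Delta U_t, \Delta U_h, \Delta U_g$ are nonnegative, and by (ii) at least one is strictly positive; since the coefficients $2,2,1,1$ are all positive, the right-hand side is strictly positive, giving $W(\mu_s^t) > W(\mu)$.

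The step I expect to be the main obstacle is the careful bookkeeping behind the doubling identity: one must track exactly which housemate sets change and verify, using the symmetry of $w$, that the bystanders' aggregate contribution reproduces $\Delta U_s + \Delta U_t$ rather than something that only partially cancels. The conceptually essential point is to pin down precisely where each hypothesis enters — objective desirability is what annihilates the movers' desirability terms, and ``quotas exactly met'' ensures that no hole (with its degenerate desirability) can participate in the swap and spoil that cancellation.
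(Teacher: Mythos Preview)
Your proof is correct and follows essentially the same route as the paper: both decompose $W(\mu_s^t)-W(\mu)$ into the house-side change $\Delta U_h+\Delta U_g$ plus the student-side change, and both use the symmetry of $w$ to show the latter equals $2(\Delta U_s+\Delta U_t)$. Your presentation is in fact slightly cleaner, since by writing the result as the positive combination $2\Delta U_s+2\Delta U_t+\Delta U_h+\Delta U_g$ you handle all four cases of condition~(ii) at once, whereas the paper argues ``without loss of generality $s$ strictly improves'' and appeals to symmetry for the remaining cases.
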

As before, it is now easy to prove the following theorem.
\begin{theorem} \label{thm:stable_SW} If house quotas are exactly met and $D_h^s = D_h^t \; \forall \; s \neq t$, all local maxima of $W(\mu)$ are two-sided exchange stable.  \end{theorem}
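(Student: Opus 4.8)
The plan is to argue by contradiction, reducing the statement to Lemma \ref{lem:existencehelper} in exactly the way Theorem \ref{thm:stable_general} reduces to Lemma \ref{lem:existencehelper_general}. Throughout I interpret a \emph{local maximum} of $W$ with respect to the swap neighborhood: $\mu$ is a local maximum if $W(\mu_s^t) \le W(\mu)$ for every pair of students $(s,t)$, i.e.\ no single admissible swap strictly raises the social welfare. This is the appropriate notion here, since the swap matchings $\mu_s^t$ are precisely the moves connecting a matching to its neighbors, and it is the same notion under which Theorem \ref{thm:stable_general} is stated.

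First I would fix a matching $\mu$ that is a local maximum of $W$ and suppose, for contradiction, that $\mu$ is not two-sided exchange stable. By Definition \ref{psDef_orig}, the failure of 2ES produces a pair of students $(s,t)$ whose swap matching $\mu_s^t$ satisfies both (i) $U_i(\mu_s^t)\ge U_i(\mu)$ for all $i\in\{s,t,\mu(s),\mu(t)\}$ and (ii) $U_i(\mu_s^t)>U_i(\mu)$ for at least one such $i$.

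Next I would observe that conditions (i) and (ii) are verbatim the hypotheses of Lemma \ref{lem:existencehelper}. Because we are assuming house quotas are exactly met and $D_h^s=D_h^t$ for all $s\neq t$, the lemma applies directly and yields $W(\mu_s^t)>W(\mu)$. But $\mu_s^t$ is a swap neighbor of $\mu$, so this strictly larger welfare contradicts the assumption that $\mu$ is a local maximum. Hence no blocking pair $(s,t)$ can exist, and $\mu$ is two-sided exchange stable.

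The substantive content is carried entirely by Lemma \ref{lem:existencehelper}; the theorem itself is the short contrapositive packaging of that lemma, mirroring the $\Phi$-based argument for Theorem \ref{thm:stable_general}. The only point deserving care is to align the notion of ``local maximum'' with the swap dynamics, so that the swap $\mu_s^t$ arising from the failure of stability is genuinely an admissible neighboring move of $\mu$. Once that alignment is made explicit, the argument is immediate, and I do not anticipate any genuine obstacle beyond stating it cleanly.
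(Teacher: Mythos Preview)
Your proposal is correct and essentially identical to the paper's own proof: assume a local maximum of $W$ is not 2ES, extract a blocking swap from Definition~\ref{psDef_orig}, invoke Lemma~\ref{lem:existencehelper} to get $W(\mu_s^t)>W(\mu)$, and contradict local maximality. Your explicit alignment of ``local maximum'' with the swap neighborhood exactly matches the paper's intended reading (see the footnote following the theorem).
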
 
Note that this implies that the maximally efficient matching will be two-sided exchange-stable.\footnote{Note that a local maximum of $W(\mu)$ is a matching $\mu$ for which there exists no matching $\mu'$ which is obtained from $\mu$ by swapping the assignment of exactly two students (or a student and an empty spot) and has $W(\mu')>W(\mu)$.}  

Note, however, that not all two-sided exchange-stable matchings are local maxima of $\Phi(\mu)$ or $W(\mu)$.  Such a case arises when one student, for example, refuses a swap as her utility would decrease, but the other student involved stands to benefit a great deal from such a swap.  If the swap were forced, the total potential function (or social welfare) could increase, but only at the expense of the first student.

The contrast between Theorem \ref{thm:stable_general} and the results such as \cite{Roth84} and \cite{matching_book} can be explained by considering a few aspects of the model we study.  In particular, we are using a distinct type of stability appropriate to our housing assignment market.  Further,  the assumption that the social network graph is symmetric are key to guaranteeing existence.     
\section{Finding stable matchings} \label{sec:sim}

In the previous section we have shown that a two-sided exchange-stable matching will always exist and, moreover, that under certain assumptions, socially optimal matchings are two-sided exchange-stable.  In this section, we turn to the task of developing algorithms for finding two-sided exchange-stable matchings. In particular, two natural algorithms follow immediately from our analysis.  For simplicity, in this section we will assume the conditions of Theorem \ref{thm:stable_SW}; namely, that quotas are exactly met and students rate houses according to the same rules.\footnote{We note that the results of this section extend to the more general case, using the potential function defined in Section \ref{sec:existence} rather than the social welfare function.}

\begin{algorithm} \caption{(Greedy)} \label{alg:greedy}
\begin{algorithmic}
\WHILE {$i \leq$ maxIterations}
	\STATE Search for ``approved'' swap $\mu_s^t$
	\STATE $\mu \gets \mu_s^t$
	\STATE $i \gets i+1$
\ENDWHILE
\end{algorithmic}
\end{algorithm}

Algorithm \ref{alg:greedy} proceeds by greedily considering ``approved'' swaps among students/houses that improve the social welfare.  Note that this algorithm can easily be implemented in a distributed manner, and loosely models the process by which individuals adjust a matching that is not stable.

Lemma \ref{lem:existencehelper} and Theorem \ref{thm:stable_general} immediately give that this algorithm will converge to a two-sided exchange-stable matching, since the social welfare strictly improves with each iteration, and all local maxima of $W$ are two-sided exchange-stable matchings.  Note that Algorithm \ref{alg:greedy} is not guaranteed to converge to the socially optimal stable matching; it will likely find just a local maxima of $W$.  Also, note that each iteration of the algorithm above can involve searching many pairs of students (and houses) for an ``approved'' swap.

The second algorithm we consider again seeks to optimize $W$, this time using a MCMC heat bath. In this algorithm we start with a random initial matching and at each iteration swap a random pair of students with a probability that depends on the change in social welfare: a positive change yields a probability of swapping larger than 1/2 and vice--versa.  Algorithm \ref{alg:MCMC} therefore can emerge from a local maximum.  The algorithm keeps track of the ``best'' matching found so far, even as it moves to worse matchings.  If Algorithm \ref{alg:MCMC} is run sufficiently long (perhaps exponential time) it can find the optimal two-sided exchange-stable matching \cite{haggstrom}.  However, there is no guarantee that the best matching encountered in finite time is even two-sided exchange-stable, a situation that can be remedied by applying the greedy algorithm to this matching.  Simulation results show the superiority of Algorithm \ref{alg:MCMC} to Algorithm \ref{alg:greedy} in terms of welfare, at the expense of an increase in the number of computations.


\begin{algorithm} [h]\caption{MCMC} \label{alg:MCMC}
\begin{algorithmic}
\WHILE {$i \leq$ maxIterations}
	\STATE Pick random pair of students $\{s, t\}$
	\STATE $P_T = \frac{1}{1+e^{-T(W(\mu_s^t) - W(\mu))}}$
	\STATE $\mu \gets \mu_s^t$ with probability $P_T$
	\IF {$(W(\mu_s^t) > W_{best})$}
		\STATE $W_{best} = W(\mu_s^t)$
	\ENDIF
	\STATE $i \gets i+1$
\ENDWHILE
\end{algorithmic} \end{algorithm}

To illustrate the performance of these two natural algorithms, we use two social network data sets.

The first data set is from the Caltech Project \cite{Ensminger}.  This data set is an undirected graph representing the friendship links among the undergraduates at the California Institute of Technology in 2010.  It includes approximately 900 nodes and 3500 edges.  To illustrate the algorithms, we created 10 houses and assigned them desirability values uniformly distributed from 0 to 10.  For the other side of the market, each student is assigned a score by each house, uniformly distributed from 0 to 10.  Each algorithm is run using the same initial values, and the results are shown in Figure \ref{fig:algtest}.

The second data set we use is from voting records for admin promotion at Wikipedia.  Edges in the dataset represent votes for or against a user by another user.  For simplicity, we treated the directed graph as undirected, resulting in approximately 7000 nodes and 100000 edges. To illustrate the algorithms, we created 71 houses and assigned them uniform random values between 0 and 10 as before. For the other side of the market, each users is assigned a score by each house, uniformly distributed from 0 to 10, also as before.  Each algorithm is run using the same initial values, and the results are shown in Figure \ref{fig:algtest2}.

Figures \ref{fig:algtest} and \ref{fig:algtest2} illustrate that with both networks, Algorithm 2 has longer running time than Algorithm \ref{alg:greedy}, which converges quickly.\footnote{We note that in the greedy algorithm, an ``iteration'' can take much more time to complete than one ``iteration'' of the MCMC heat bath.  Even with this effect, however, the MCMC takes longer than the greedy algorithm.} The y-axis in all figures shows the social welfare of the matching at each iteration.  As expected, Algorithm \ref{alg:greedy} converges to a sub-optimal matching for both networks, but this value is of the same order of magnitude as that found by Algorithm \ref{alg:MCMC}.

\begin{figure} [t]  \centering
  \subfloat[Algorithm 1]{\label{fig:greedyCaltech}\includegraphics[width=0.45\textwidth]{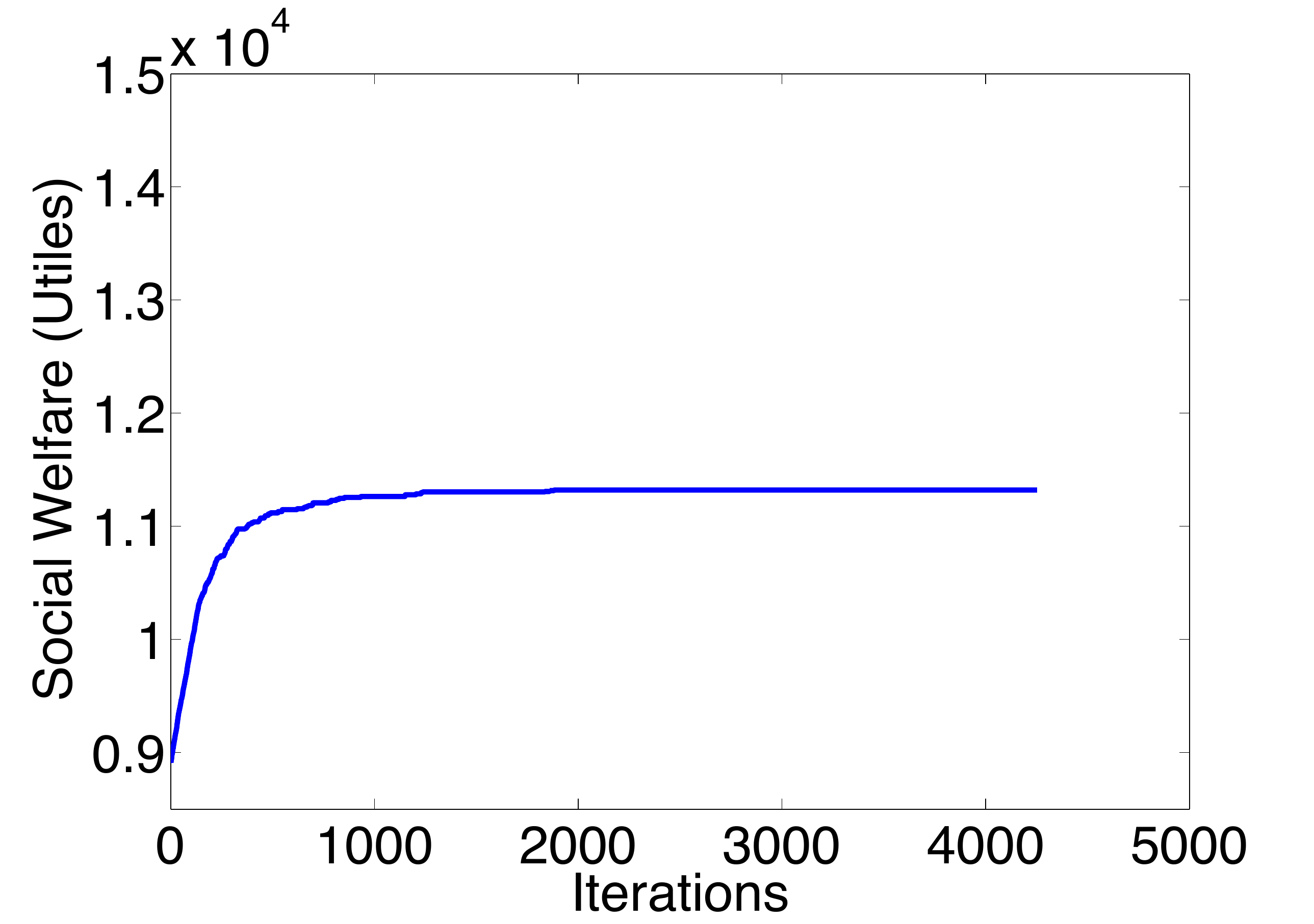}}
  \subfloat[Algorithm 2]{\label{fig:heatCaltech}\includegraphics[width=0.45\textwidth]{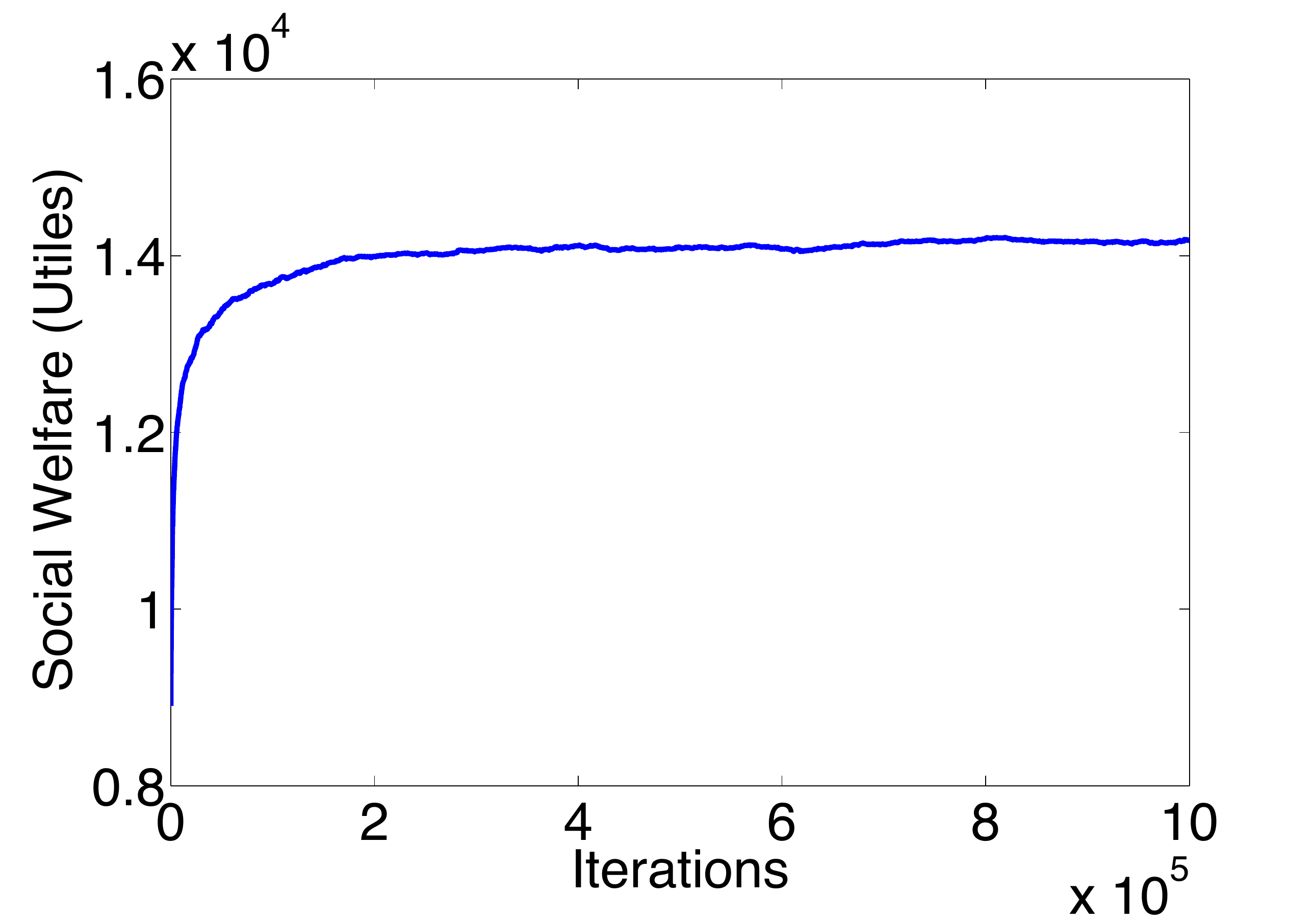}}
  \caption{Illustration of the performance of Algorithms \ref{alg:greedy} and \ref{alg:MCMC} on the Caltech Social Network}
  \label{fig:algtest}
\end{figure}

\begin{figure} [t]  \centering
  \subfloat[Algorithm 1]{\label{fig:greedyWiki}\includegraphics[width=0.45\textwidth]{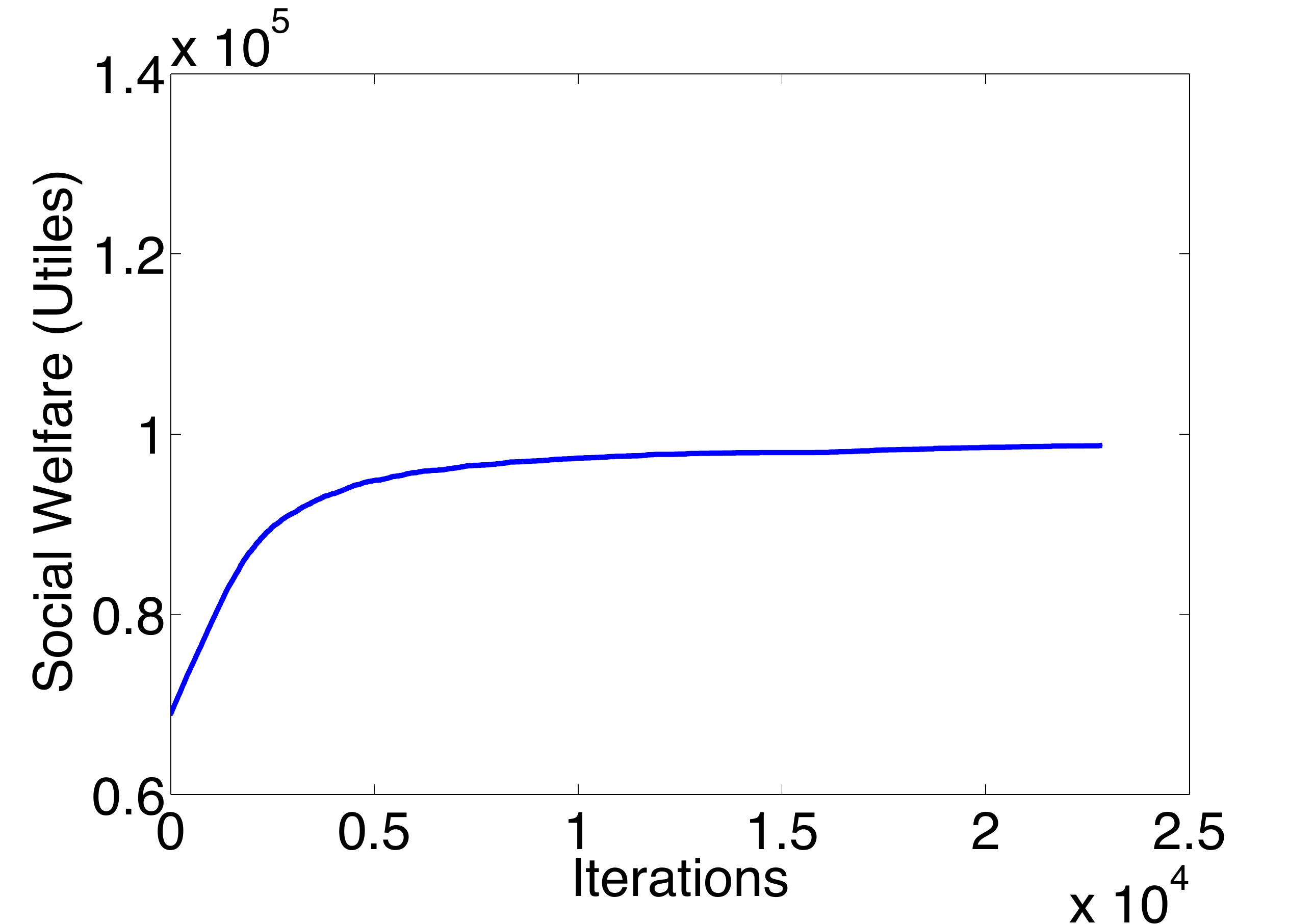}}
  \subfloat[Algorithm 2]{\label{fig:heatWiki}\includegraphics[width=0.45\textwidth]{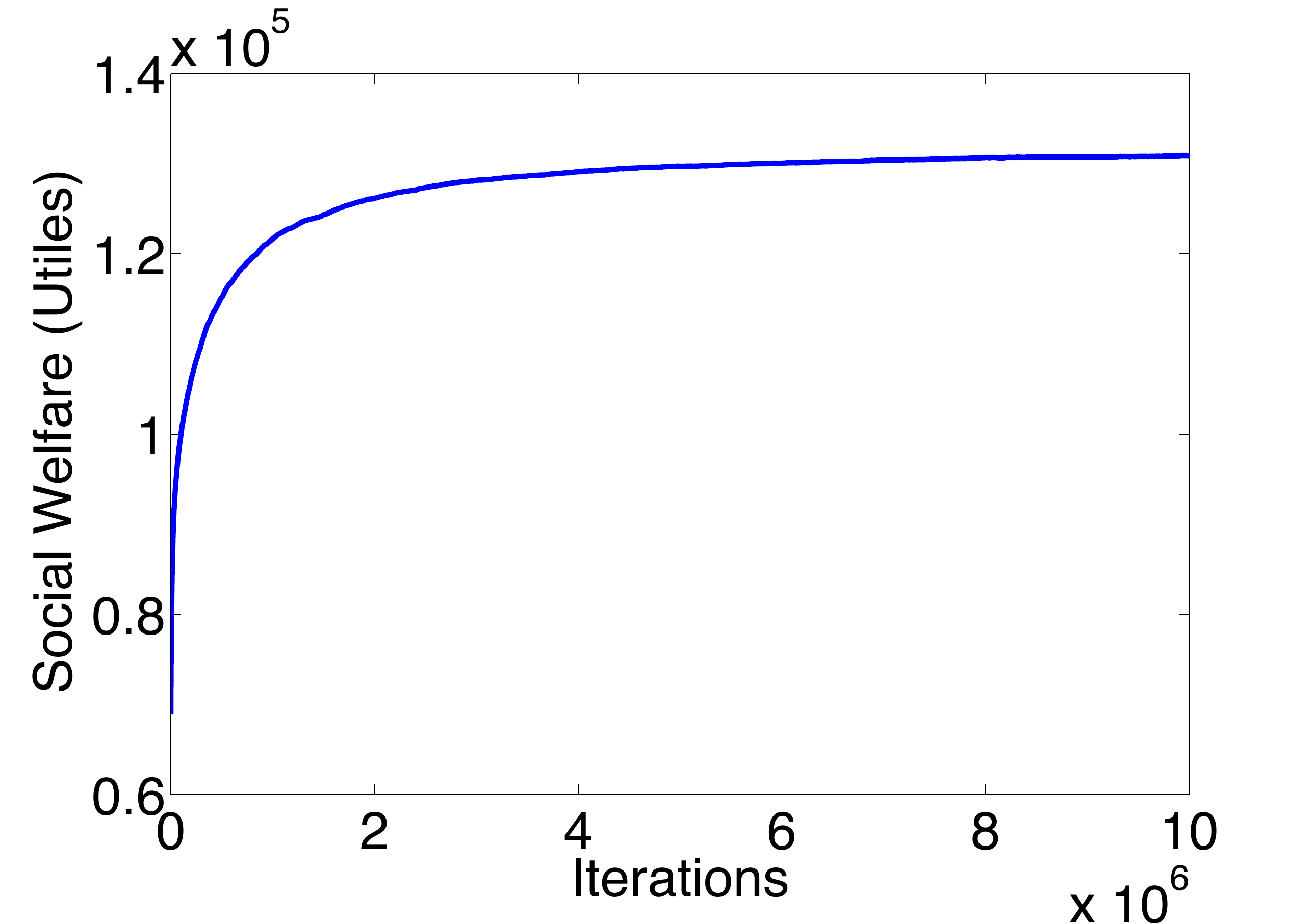}}
  \caption{Illustration of the performance of Algorithms \ref{alg:greedy} and \ref{alg:MCMC} on the Wikipedia Voting network}
  \label{fig:algtest2}
\end{figure}
\section{Efficiency of stable matchings}\label{sec:PoA}

To this point, we have focused on the existence of two-sided exchange-stable matchings and how to find them.  In this section our focus is on the ``efficiency loss''  due to stability in a matching market and the role peer effects play in this efficiency loss.

We measure the efficiency loss in a matching market using the price of stability (PoS) and the price of anarchy (PoA) as defined in Section \ref{section:model}.  Interestingly, the price of anarchy has multiple interpretations in the context of this paper.  First, as is standard, it measures the worst-case loss of social welfare that results due to enforcing exchange-stability.  For example, the authors in \cite{AnsheDas09} bound the loss in social welfare caused by individual rationality (by enforcing stable matchings) for matching markets without externalities.  Second, it provides a competitive ratio for Algorithm \ref{alg:greedy} for finding a stable matching, since Algorithm 1 provides no guarantee about which stable matching it will find.  Third, we show later that the price of anarchy also has an interpretation as capturing the efficiency lost due to peer effects.

The results in this section all require one additional simplifying assumption to our model: \emph{complementarities are ignored and only peer effects are considered}.  Specifically, we assume, for all of our PoA results, $U_h(\mu) = 0$, and thus $W(\mu)=\sum_{s\in S} U_s(\mu)$.  Under this assumption, the market is one-sided, with only students participating-- as a result we are only considering exchange-stability.  This assumption is limiting, but there are still many settings within which the model is appropriate.  Two examples are the housing assignment problem in the case when students can swap positions without needing house approval, and the assignment of faculty to offices as discussed in \cite{YarivWilson09}, as clearly the offices have no preferences over which faculty occupy them.  In order to simplify the analysis, we also make use of the assumptions in Theorem \ref{thm:stable_SW} : (i) $D^s_h = D^t_h \; \forall \; s \neq t$ and (ii) house quotas are exactly met.

\subsection{Related models}

When the housing assignment problem is restricted to a one-sided market involving only students, we note that it becomes very similar to both (i) a hedonic coalition formation game with symmetric additively separable preferences, as described in \cite{JacksonBogo02}, and (ii) a coalitional affinity game, as described in \cite{BranzeiLarson09}.  

In hedonic coalition formation games, agents' preferences for a given coalition are based on the other members of that coalition \cite{DrezeGreenberg80}. Note that coalition games are necessarily one-sided -- agents care about the coalitions but the coalitions cannot care about the agents.  The most related work to ours in this area is \cite{JacksonBogo02}, where the authors show that when agents' preferences over coalition are symmetric and additively separable (as the student utility functions in the housing assignment problem are), a Nash (and individually) stable coalition structure will always exist.  This mimics the existence result proved in Section \ref{sec:existence}, however our result applies for a two-sided market. Further work on hedonic games looks at the complexity of finding stable coalition structures; see \cite{Aziz10,BuraniZwicker03,ElkindWooldridge09,GairingSavani10,Olsen09} for examples.

Coalitional affinity games consider the pairwise relationships between agents, as represented by a weighted graph \cite{BranzeiLarson09}, and are a special subclass of hedonic games.  The most related result to the current work is \cite{BranzeiLarson09}, which proves a tight upper bound on the Price of Anarchy using the notion of core stability\footnote{A coalition structure is core stable if no set of agents can break away and form a new coalition to improve their own utility.} when the weighted graph is symmetric.

While the one-sided housing assignment problem and hedonic coalition formation games appear to be very similar, there are a number of key differences.  Most importantly, the housing assignment problem considers a \emph{fixed} number of houses with a limited number of spots available; students cannot break away and form a new coalition/house, nor can a house have more students than its quota.  In addition, our model considers exchange-stability, which is closest to the Nash stability of \cite{JacksonBogo02}, but is still significantly different in that it involves a pair of students willing and able to swap.  Finally, each student gains utility from the house they are matched with, in addition to the other members of that house, which is different from the original formulation of hedonic coalition games.

\subsection{Discussion of results}

To begin the discussion of our results, note that, as discussed in Section \ref{sec:existence}, the price of stability is 1 for our model because any social welfare optimizing matching is stable.

However, the price of anarchy can be much larger than 1.  In fact, depending on the social network, the price of anarchy can be unboundedly large, as illustrated in the following example.

\begin{figure} [t] \centering 
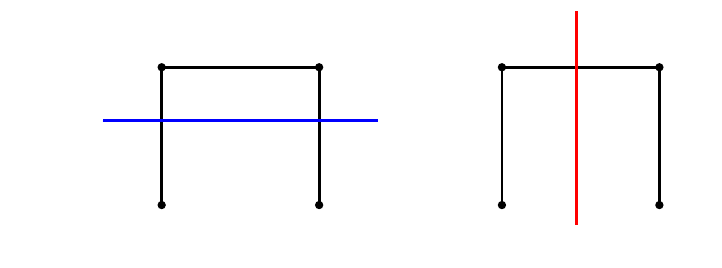
\caption{Arbitrarily bad exchange-stable matching} \label{fig:arbBad} \end{figure}

\begin{example}[Unbounded price of anarchy]

Consider a matching market with 4 students and 2 houses, each with a quota of 2, and two possible matchings illustrated by Figure \ref{fig:arbBad}.  As shown in Figure \ref{fig:arbBad} (a) and (b), respectively, in the optimal matching $\mu^*$, $W(\mu^*) = k$; whereas there exists a exchange-stable matching with $W(\mu) = 2$.  Thus, as $k$ increases, the price of anarchy grows linearly in $k$.
\end{example}

Despite the fact that, in general, there is a large efficiency loss that results from enforcing exchange-stability, in many realistic cases the efficiency loss is actually quite small. The following two theorems provide insight into such cases.

A key parameter in these theorems is $\gamma_m^*$ which captures how well the social network can be ``clustered'' into a \emph{fixed} number of $m$ groups and is defined as follows.
\begin{align}
\label{gammaDef}\gamma_m(\mu) &:= \frac{E_{in}(\mu)}{|E|} \\
\label{gammastarDef}\gamma_m^* &:= \max_\mu \gamma_m(\mu)
\end{align}
Thus, $\gamma_m^*$ represents the maximum edges that can be captured by a partition satisfying the house quotas.  Note that $\gamma_m^*$ is highly related to other clustering metrics, such as the conductance \cite{Kannan04}, \cite{Shi00} and expansion \cite{Radicchi04}.  

We begin by noting that due to the assumption that $\sum_{h\in H} U_h(\mu) = 0$, we can separate the social welfare function into two components:
\begin{equation*}
W(\mu)  = \sum_{s \in S} U_s(\mu)  = \sum_{h \in H} \sum_{s\in \mu(h)} \left(D_h + \sum_{t\in \mu(h)} w(s,t) \right) 
 = 2 E_{in} (\mu)+\sum_{h\in H} q_h D_h.
\end{equation*}
Thus,
\begin{equation}
\frac{\max_\mu W(\mu)}{\min_\text{$\mu$ is stable} W(\mu)}
\label{PoASpec}  = \frac{Q+\max_\mu \gamma_m(\mu)}{Q+\min_{\text{stable $\mu$}} \gamma_m(\mu)}
\end{equation}
where 
\begin{equation} \label{Qdef} Q := \frac{\sum_{h\in H} q_h D_h}{2 E}. \end{equation}
Note that the parameter $Q$ is independent of the particular matching $\mu$.

Our first theorem regarding efficiency is for the ``simple'' case of unweighted social networks with equal house quotas and/or equivalently valued houses.

\begin{theorem} \label{thm:PoASimple}
Let $w(s,t) \in \left\{0,1\right\}$ for all students $s,t$ and let $q_h\geq 2, D_h \in \mathbb{Z}^+\cup\{0\}$ for all houses $h$.  If $q_h = q $ for all $h$ and/or $D_h = D$ for all $h$, then
\begin{equation*}
\min_{\text{stable }\mu } W(\mu) \geq \frac{\max_\mu W(\mu)}{1+2(m-1)\gamma_m^*}
\end{equation*}
\end{theorem}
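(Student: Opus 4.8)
The plan is to convert the statement into a single combinatorial inequality about internal edges, and then bound the ``internal-edge shortfall'' of a stable matching using the no-beneficial-swap condition.

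First I would set up the reduction. Since $\sum_h q_h D_h$ does not depend on the matching, the decomposition $W(\mu)=2E_{in}(\mu)+\sum_h q_h D_h$ gives $\max_\mu W(\mu)=2\gamma_m^*|E|+\sum_h q_h D_h$ together with the identity $Q+\gamma_m(\mu)=W(\mu)/(2|E|)$. Starting from \eqref{PoASpec}, the claimed bound is equivalent to $\frac{Q+\gamma_m^*}{Q+\min_{\mu\text{ stable}}\gamma_m(\mu)}\le 1+2(m-1)\gamma_m^*$; clearing the (positive) denominator and subtracting, it therefore suffices to establish, for every stable $\mu$,
\begin{equation*}
\gamma_m^*-\gamma_m(\mu)\le 2(m-1)\,\gamma_m^*\,\bigl(Q+\gamma_m(\mu)\bigr).
\end{equation*}
Multiplying through by $|E|$ and using $2|E|\,(Q+\gamma_m(\mu))=W(\mu)$, this becomes the single core inequality
\begin{equation*}
|E|\,\bigl(E_{in}(\mu^*)-E_{in}(\mu)\bigr)\le (m-1)\,E_{in}(\mu^*)\,W(\mu),
\end{equation*}
where $\mu^*$ is a welfare-maximizing matching. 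I would stress that the desirability mass inside $W(\mu)$ must be kept on the right-hand side: replacing $W(\mu)$ by $2E_{in}(\mu)$ yields a \emph{false} statement once the $D_h$ differ, since students may stably cluster by desirability rather than friendship, forcing $E_{in}(\mu)=0$ while $W(\mu)$ stays large.

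Second, I would prove the core inequality by charging the shortfall $E_{in}(\mu^*)-E_{in}(\mu)$ to the internal edges (and the desirability slack) of $\mu$. Writing $\phi_s(g):=\sum_{x\in\mu(g)}w(s,x)$ for the number of friends $s$ has in house $g$ under $\mu$, a swap of $s\in\mu(h)$ and $t\in\mu(g)$ changes $U_s$ by $(D_g-D_h)+\phi_s(g)-\phi_s(h)-w(s,t)$. Under the hypothesis $q_h=q$ and/or $D_h=D$ the quota/desirability contributions either cancel or are common across houses, so the blocking condition becomes a clean statement about the relevant \emph{integer} surpluses; because $w\in\{0,1\}$ and $D_h\in\mathbb{Z}$, ``$s$ strictly prefers $g$ via a non-friend partner'' means the surplus is at least $1$, and $q_h\ge 2$ guarantees a genuine second student to swap with. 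The structural consequence of stability I would extract is pairwise: for any two houses $h,g$, every student in $\mu(h)$ that would strictly gain by moving to $g$ must be adjacent to every student in $\mu(g)$ that would strictly gain by moving to $h$ (else that non-adjacent pair swaps and both strictly improve). I would then bound, for each unordered pair $\{h,g\}$, the cross-weight $E_{hg}(\mu)$ by $E_{hh}(\mu)+E_{gg}(\mu)$ (plus, where desirabilities differ, the desirability slack keeping the would-be movers in place, which is what the $W(\mu)$ term pays for), and sum over the $\binom{m}{2}$ pairs; since each $E_{hh}(\mu)$ appears in exactly $m-1$ pairs, this produces the factor $(m-1)$. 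Comparing the resulting lower bound on $E_{in}(\mu)$ against $\mu^*$, which by definition maximizes internal weight, supplies the factor $E_{in}(\mu^*)=\gamma_m^*|E|$.

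The step I expect to be the main obstacle is exactly this pairwise/global charging, because of the coupling inherent in exchange stability: unlike Nash-type stability, a student with almost no friends in its own house can be locked in place merely because no partner in the target house is willing to move. Hence one cannot lower-bound any single student's home-friendships (a student may have all its friends in one already-full, tightly-knit house), and a per-student argument is hopeless; the bound must be aggregate, pairing each potential mover against the internal edges, or the desirability gap, that make its would-be partners unwilling to swap. Making this charging simultaneously (i) tight enough to yield the constant $2(m-1)$, (ii) correctly scaled by $\gamma_m^*$ through the comparison with $\mu^*$, and (iii) valid in both the equal-quota and equal-desirability regimes of the ``and/or'' hypothesis, is the delicate heart of the proof; the integrality of $w$ and $D_h$, which turns strict preference into a surplus of at least one, is what keeps the case analysis finite and is used throughout.
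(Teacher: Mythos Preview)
Your high-level strategy matches the paper's: reduce the PoA bound to a lower bound on $E_{in}(\mu)$ for stable $\mu$, obtain that lower bound by controlling each cross-weight $E_{hg}(\mu)$ in terms of $E_{hh}(\mu)$, $E_{gg}(\mu)$, and a desirability correction, and then sum over the $\binom{m}{2}$ pairs so that each $E_{hh}$ is counted $m-1$ times. Your reformulation as the ``core inequality'' is a correct equivalence with the theorem, and the structural observation that strict gainers in $h$ toward $g$ must be adjacent to strict gainers in $g$ toward $h$ is exactly one of the constraints the paper exploits.

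The gap is the constant in the pairwise bound. You propose $E_{hg}\le E_{hh}+E_{gg}$ (plus desirability slack), but this fails already with $D_h=D_g=0$ and $q=2$: take $h=\{a,b\}$, $g=\{c,d\}$, edges $a\text{--}c$, $a\text{--}d$, $c\text{--}d$. Then $E_{hg}=2$, $E_{hh}=0$, $E_{gg}=1$, and the matching is exchange-stable (any swap partner for $a$ would lose the friend in $g$, and $b$ is indifferent to every swap). The correct inequality, which the paper proves as its key lemma, is
\[
E_{hg}\;\le\;2\bigl(E_{hh}+E_{gg}\bigr)+\max\bigl(q_h(D_h-D_g),\,q_g(D_g-D_h)\bigr),
\]
and the factor $2$ here is precisely what, after summing, yields the $2(m-1)$ in the final bound. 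Proving the factor-$2$ version requires more than the single adjacency observation: the paper splits into the case where some student has $\alpha_\mu(\cdot,g)>1$ (then every student in $g$ must strictly refuse, giving $E_{gh}\le 2E_{gg}+q_g(D_g-D_h)$ directly) versus the case where all $\alpha\le 1$ on both sides, and in the latter partitions each house into $\alpha=1$, $\alpha=0$, $\alpha\le -1$ classes and proves a nontrivial combinatorial inequality among the six class sizes. The integrality of $w$ and $D_h$ is used exactly here, to make these classes exhaustive.

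A second step you leave vague is how the desirability slack is absorbed and how $\gamma_m^*$ enters. The paper does \emph{not} compare to $\mu^*$ at this stage; instead it proves the auxiliary inequality $\sum_{g<h}q_h(D_h-D_g)\le(m-1)\sum_h q_h D_h$ and then splits into two cases according to whether $|E|$ exceeds the total desirability difference, using $\gamma_m^*\le 1$ in one case and the trivial $\gamma_m(\mu)\ge 0$ in the other. Your ``compare to $\mu^*$'' would have to reproduce this algebra, and it is not obvious from your outline how the factor $E_{in}(\mu^*)$ on the right of the core inequality emerges from a pairwise bound that involves only $|E|$, $E_{in}(\mu)$, and the $q_hD_h$.
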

The bound in Theorem \ref{thm:PoASimple} is tight, as illustrated by the example below.

\begin{figure} [h]  \centering
\def\svgwidth{.6\textwidth}
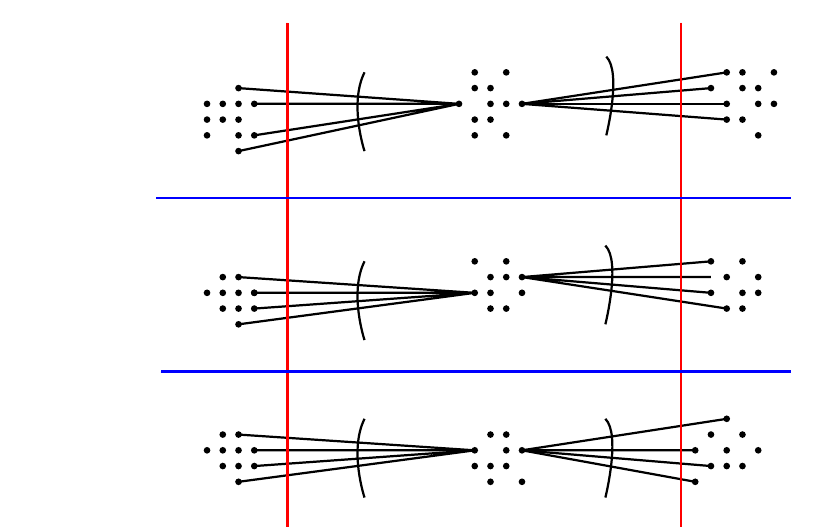 
\caption{Network that achieves PoA bound.} \label{fig:tightEx} \end{figure}

\begin{example}[Tightness of Theorem \ref{thm:PoASimple}]

Consider a setting with $m$ houses and $q_h = mk$ for all $h \in H$.  Students are grouped into clusters of size $k >2$, as shown for $m=3$ in Figure \ref{fig:tightEx}.  The houses have $D_h = k+1$ and $D_g = D_i = 0$.  Each student in the middle cluster in each row has $k$ edges to the other students outside of their cluster (but none within), as shown.

The worst-case stable exchange-matching is represented by the vertical red lines.  Note that since $D_h = k + 1$, this matching is stable, even though all edges are cut.  Thus $\min_{\mu \text{ stable}} \gamma_m(\mu) = 0$.  The optimal matching is represented by the horizontal blue lines in the figure; note that $\gamma_m^* = 1$.  To calculate the price of anarchy, we start from equations \eqref{PoASpec} and \eqref{Qdef} and calculate
\begin{equation*}
Q = \frac{\sum_{h \in H}q_h D_h}{2|E|} = \frac{mk(k+1)}{2mk(m-1)k} = \frac{k+1}{2(m-1)k},
\end{equation*}
which gives,
\begin{equation*} \frac{\max_\mu W(\mu)}{\min_{ \text{stable }\mu} W(\mu)}  = \frac{Q + \gamma_m^*}{Q + \min_{\mu \text{ stable}}\gamma_m(\mu)}  = 1 + 2(m-1)\left(\frac{k}{k+1}\right).
\end{equation*}
Notice that as $k$ becomes large, this approaches the bound of $1 + 2(m-1)\gamma_m^*$. \end{example}

We note that the requirement $q_h = q $ for all $h$ and/or $D_h = D$ for all $h$ is key to the proof of Theorem \ref{thm:PoASimple} and in obtaining such a simple bound; otherwise, Theorem \ref{thm:PoAGeneral} applies.  We omit the proofs of these theorems here for brevity; see Appendix \ref{app:PoA} for the details.

Our second theorem removes the restrictions in the theorem above, at the expense of a slightly weaker bound. Define $q_{max} = \max_{h \in H} q_h$, $w_{max} = \max_{s,t \in S} w(s,t)$ and $D_\Delta = \min_{h,g \in H} (D_h - D_g)$, assuming that the houses are ordered in increasing values of $D_h$.

\begin{theorem} \label{thm:PoAGeneral}
Let $w(s,t) \in \mathbb{R}^+\cup\{0\}$ for all students $s,t$ and $D_h \in \mathbb{R}^+\cup\{0\}$, $q_h \in \mathbb{Z}^+$ for all houses $h$, then
\begin{equation*}
\min_{ \text{stable }\mu} W(\mu) \geq \frac{\max_\mu W(\mu)}{1+2(m-1)\left(\gamma_m^*+\tfrac{q_{max}w_{max}}{D_\Delta}\right)}
\end{equation*}
\end{theorem}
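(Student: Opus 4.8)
The plan is to reduce the statement to a single inequality about internal edge weight, and then to extract that inequality from the pairwise exchange-stability conditions. Fix an optimal matching $\mu^*$, so that $\gamma_m(\mu^*)=\gamma_m^*$, and an arbitrary stable matching $\mu$. Using the decomposition established just above the theorem, namely $W(\nu)=2|E|\big(Q+\gamma_m(\nu)\big)$ for every matching $\nu$ together with $\gamma_m(\nu)=E_{in}(\nu)/|E|$, the asserted bound is equivalent, after clearing denominators, to the single inequality
\[
\big(E_{in}(\mu^*)-E_{in}(\mu)\big)\,|E|\ \le\ (m-1)\Big(E_{in}(\mu^*)+\tfrac{q_{max}w_{max}}{D_\Delta}\,|E|\Big)\,W(\mu).
\]
The factor $2(m-1)$ in the theorem becomes $(m-1)$ here because $W$ counts each internal edge twice (the $2E_{in}$ term in the decomposition). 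So everything reduces to an upper bound on the \emph{internal-weight deficit} $E_{in}(\mu^*)-E_{in}(\mu)$ of a stable matching relative to the optimum, and it suffices to prove the displayed inequality for every stable $\mu$, the worst one included.

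Next I would record stability in usable form. For $s\in\mu(h)$ and $t\in\mu(g)$ with $h\neq g$, writing $w_s(h)=\sum_{x\in\mu(h)}w(s,x)$, the swap $\mu_s^t$ changes the two students' utilities by
\[
\Delta_s=(D_g-D_h)+w_s(g)-w_s(h)-w(s,t),\qquad \Delta_t=(D_h-D_g)+w_t(h)-w_t(g)-w(s,t),
\]
and exchange-stability forbids $\Delta_s\ge0\wedge\Delta_t\ge0$ with at least one strict inequality. The structural claim I would then prove is that if the stable $\mu$ cut substantially more friendship weight than $\mu^*$ keeps internal, then for some ordered pair $(h,g)$ with $D_h\ge D_g$ one can exhibit $s\in\mu(h)$ and $t\in\mu(g)$ with both $\Delta_s>0$ and $\Delta_t>0$, contradicting stability -- unless the desirability gap $D_h-D_g\ge D_\Delta$ is large enough to keep the reluctant party in place. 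Quantifying ``large enough'': a single student contributes friendship of total weight at most $q_{max}w_{max}$ to any one house, so a gap of size $D_\Delta$ can neutralize a friendship pull of magnitude at most $q_{max}w_{max}/D_\Delta$ per deterred swap. Aggregating these per-pair obstructions over the $m-1$ houses into which the occupants of a given house could move yields the coefficient $(m-1)$ and the additive correction $q_{max}w_{max}/D_\Delta$ in the deficit inequality; plugging that bound back in, using $E_{in}(\mu)\ge 0$, and inverting the algebra of the first paragraph gives the theorem. One also checks that in the unweighted, uniform-quota-or-desirability regime the correction term disappears and the bound collapses to Theorem \ref{thm:PoASimple}.

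The main obstacle is precisely this middle step: converting the \emph{pairwise, disjunctive} stability conditions into a \emph{global} lower bound on $W(\mu)$. As the unbounded price-of-anarchy example shows, stability alone places no bound whatsoever on the cut weight -- a stable matching may sever every friendship edge -- so the deficit cannot be controlled by stability in isolation; the bound materializes only through the coupling that any pair left split in $\mu$ but united in $\mu^*$ must be justified by a desirability gap, and it is the gaps (captured through $D_\Delta$ and, summed, through $Q$) that carry the argument when friendships are strong. The delicate points are to aggregate the $\binom{m}{2}$ pairwise swap inequalities into a single clean bound without double counting, and to isolate exactly the ratio $q_{max}w_{max}/D_\Delta$; obtaining the tight constant $2(m-1)$ rather than something larger requires charging each lost edge to a specific house pair and exploiting the symmetry $w(s,t)=w(t,s)$ so that each edge is accounted for from both of its endpoints.
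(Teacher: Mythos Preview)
Your algebraic reduction in the first paragraph is correct: the theorem is indeed equivalent to the deficit inequality you display. But the ``middle step'' you yourself flag is not merely incomplete; the heuristic you offer for it is wrong in a way that would send you down the wrong path. The sentence ``a gap of size $D_\Delta$ can neutralize a friendship pull of magnitude at most $q_{max}w_{max}/D_\Delta$ per deterred swap'' is dimensionally confused: a desirability gap, measured in utility, offsets a friendship pull of the same magnitude, not their ratio. Likewise, the factor $m-1$ does not arise from ``the $m-1$ houses into which the occupants of a given house could move''; it comes from a different mechanism entirely.

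The paper's argument is structurally different from what you sketch. It does not bound the deficit $E_{in}(\mu^*)-E_{in}(\mu)$ directly. Instead it proves, for every pair of houses $h,g$ and every stable $\mu$, the per-pair cross-edge bound
\[
E_{hg}\ \le\ \max\bigl(q_h(D_h-D_g),\,q_g(D_g-D_h)\bigr)+2(E_{hh}+E_{gg})+q_{\max}w_{\max},
\]
obtained by summing the stability inequalities $\alpha_\mu(t,h)<w(s,t)\le w_{\max}$ (or the symmetric version) over one house. This is where $q_{\max}w_{\max}$ actually enters: without the integrality assumptions of Theorem~\ref{thm:PoASimple}, the condition ``$\alpha<w(s,t)$'' no longer yields ``$\alpha\le 0$'', only ``$\alpha<w_{\max}$'', and summing over a house of size at most $q_{\max}$ produces the additive $q_{\max}w_{\max}$ slack. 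Summing the cross-edge bound over all $\binom{m}{2}$ pairs then yields a lower bound on $E_{in}(\mu)$ (each $E_{hh}$ appears $m-1$ times on the right, which is the source of the $2m-1$ in the $\gamma_m$ bound), and the final factor $m-1$ comes separately from the elementary inequality $\sum_{g<h}q_h(D_h-D_g)\le(m-1)\sum_h q_hD_h$. The quantity $D_\Delta$ appears nowhere in the per-pair analysis; it enters only at the very last step, when two cases (large vs.\ small $|E|$ relative to $\sum_{g<h}q_h(D_h-D_g)+\binom{m}{2}q_{\max}w_{\max}$) are merged into one looser statement.

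So your contradiction strategy---``if the deficit is too large, exhibit a blocking swap''---is not how the proof goes, and I do not see how to make it go that way: stability is a disjunctive condition on individual pairs, and the paper converts it into an \emph{additive} upper bound on each $E_{hg}$ rather than a single contradiction. The pieces you would actually need (the per-pair $E_{hg}$ bound and the $(m-1)$ lemma on desirability sums) are precisely the technical content you have not supplied.
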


Though Theorem \ref{thm:PoASimple} is tight, it is unclear at this point whether Theorem \ref{thm:PoAGeneral} is also tight.  However, a slight modification of the above example does show that it has the correct asymptotics, i.e., there exists a family of examples that have price of anarchy $\Theta(m \gamma_m^* q_{max} w_{max} D^{-1}_\Delta)$.

A first observation one can make about these theorems is that the price of anarchy has no direct dependence on the number of students.  This is an important practical observation since the number of houses is typically small, while the number of students can be quite large (similar phenomena hold in many other many-to-one matching markets).  In contrast, the theorems highlight that the degree of heterogeneity in quotas, network edge weights, and house valuations all significantly impact inefficiency.

\begin{figure*} [t]  \centering
  \subfloat[Caltech $\gamma_m^*$]{\label{fig:gammaCaltech}\includegraphics[width=0.45\textwidth]{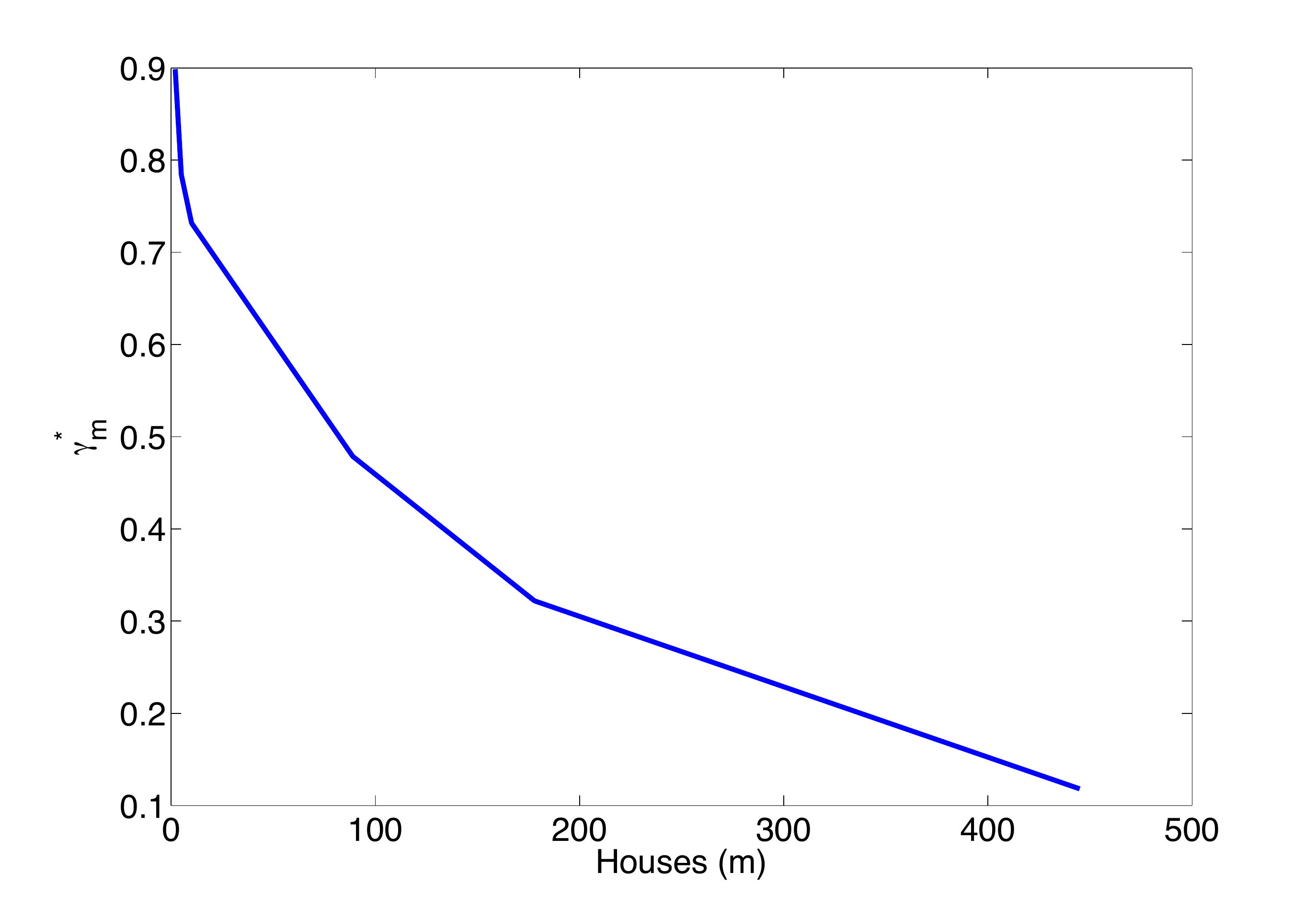}}
  \subfloat[Wikipedia $\gamma_m^*$]{\label{fig:gammaWiki}\includegraphics[width=0.45\textwidth]{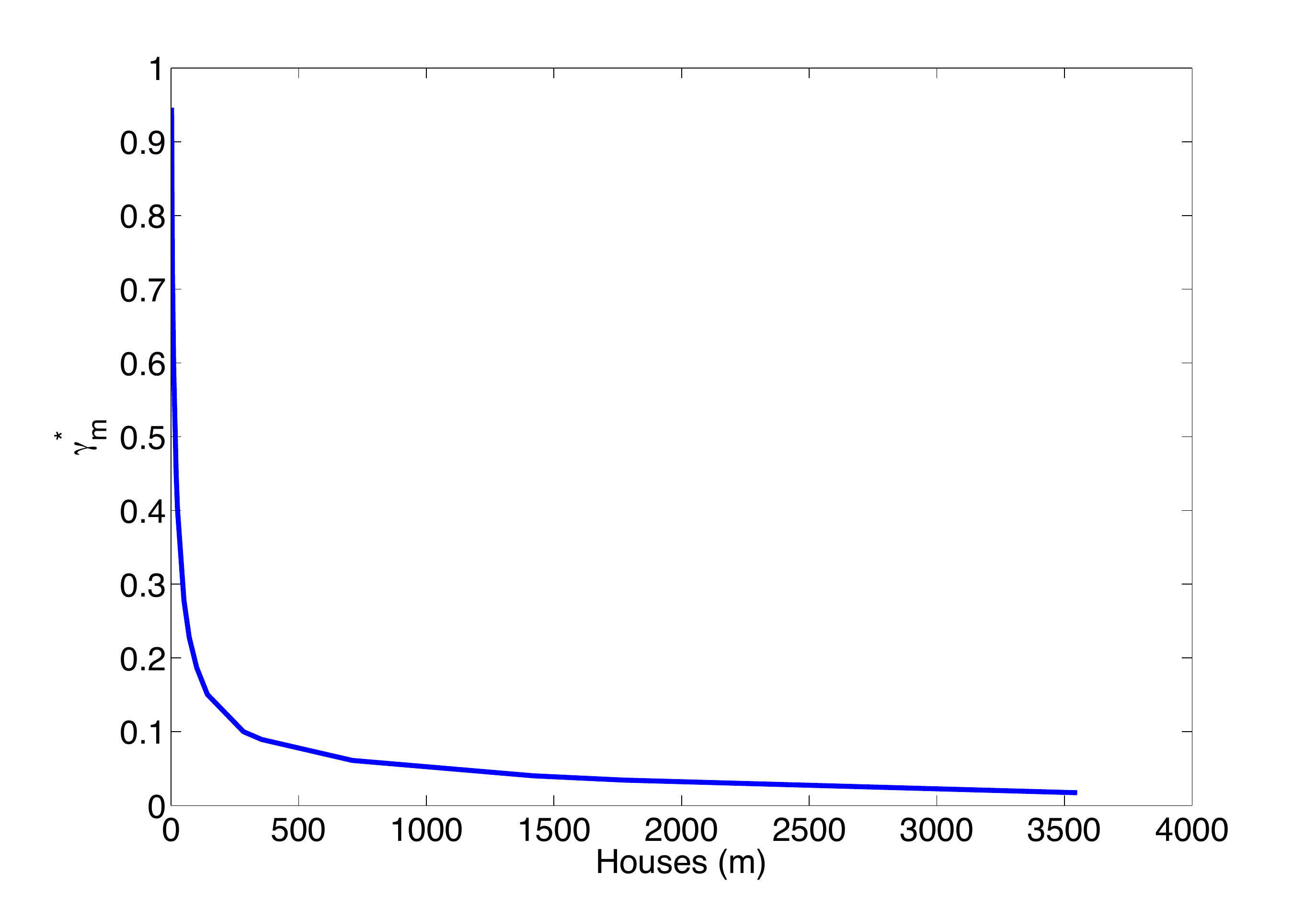}}\\
  \subfloat[Caltech PoA]{\label{fig:poaCaltech}\includegraphics[width=0.45\textwidth]{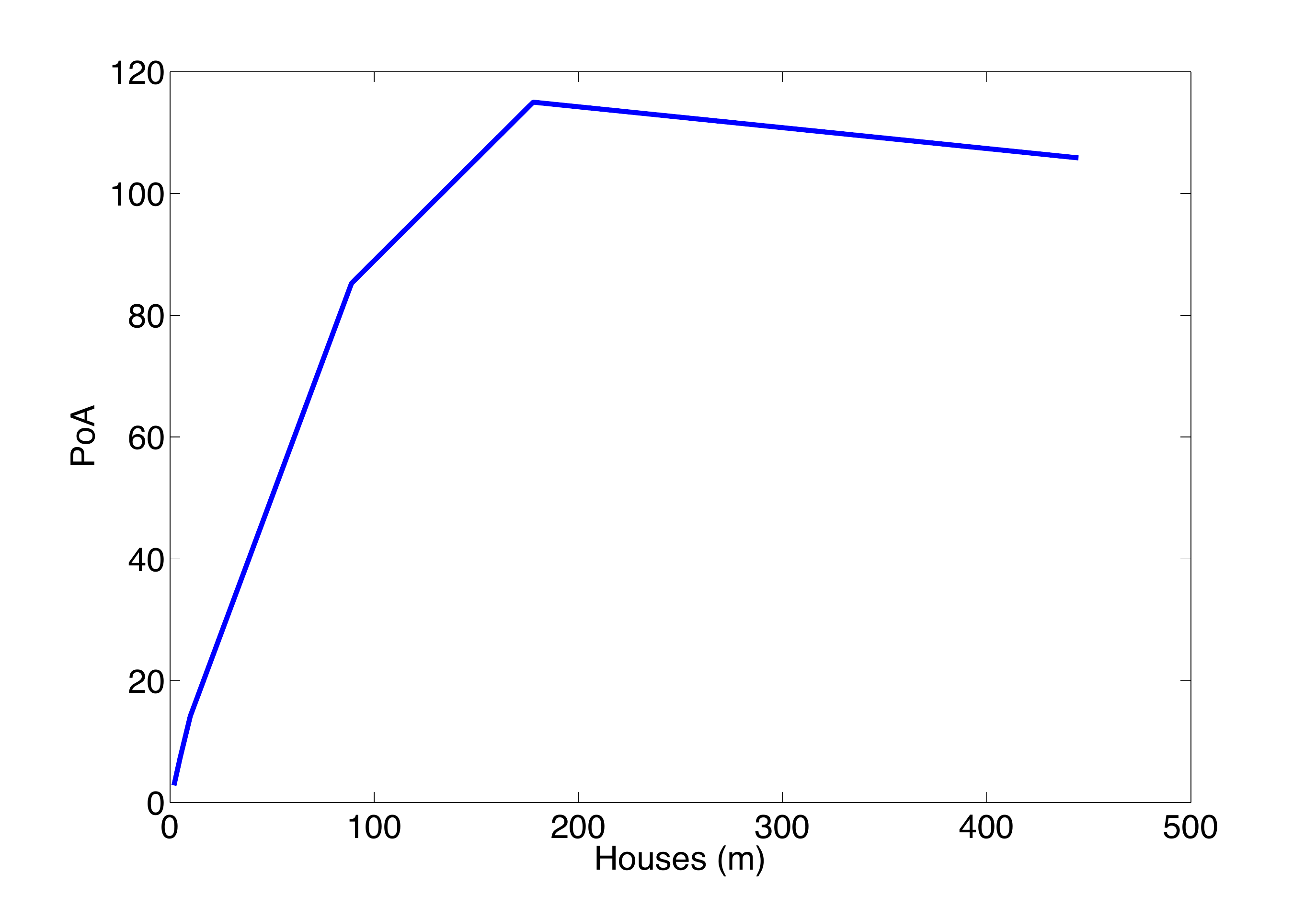}}
  \subfloat[Wikipedia PoA]{\label{fig:poaWiki}\includegraphics[width=0.45\textwidth]{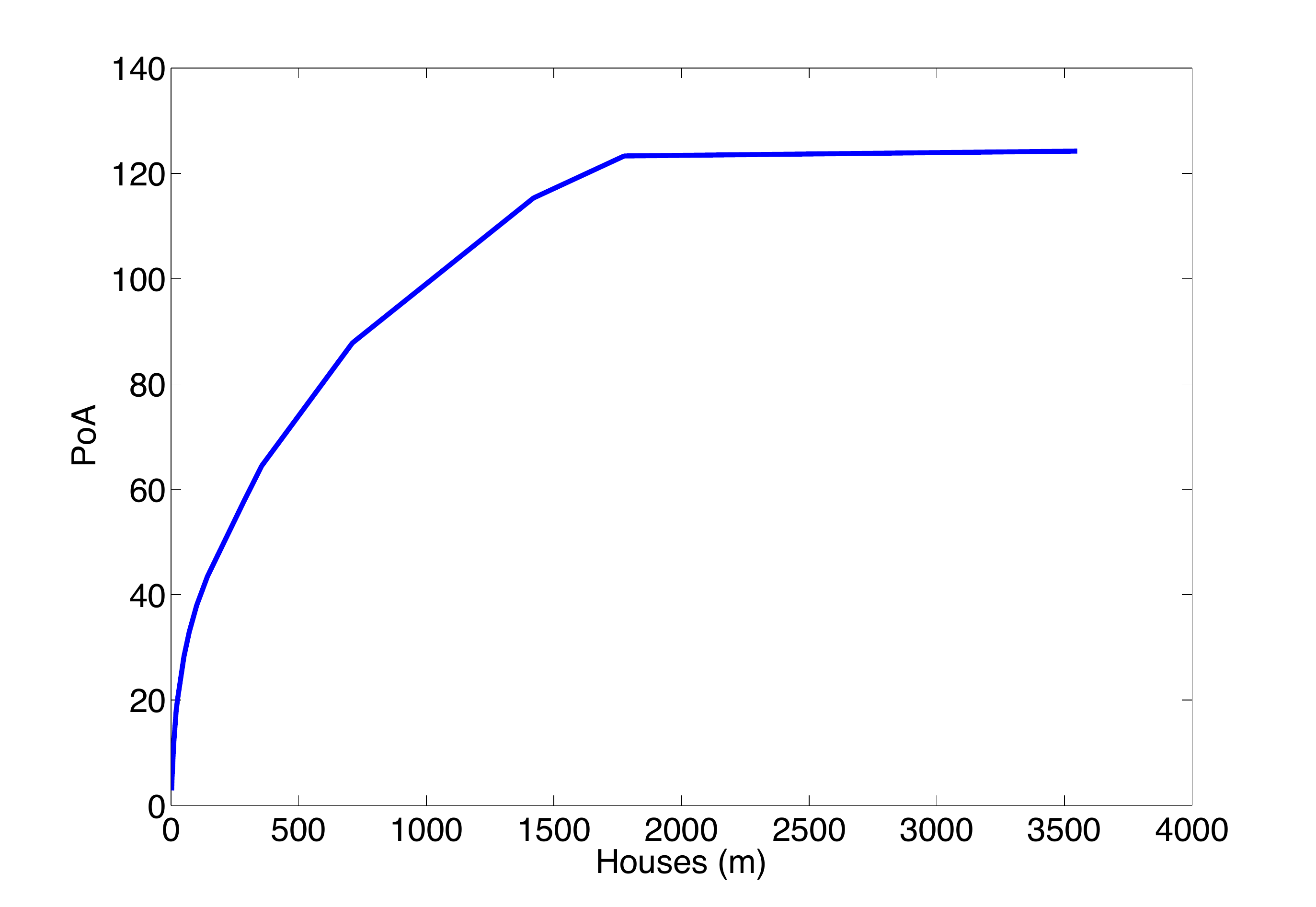}}
  \caption{Illustration of $\gamma_m^*$ and price of anarchy bounds in Theorem \ref{thm:PoASimple} for Caltech and Wikipedia networks.}
  \label{fig:gammaPoA}
\end{figure*}

A second remark about the theorems is that the only dependence on the social network is through $\gamma_m^*$, which measures how well the graph can be ``clustered'' into $m$ groups.  An important note about $\gamma_m^*$ is that it is highly dependent on $m$, and tends to shrink quickly as $m$ grows.  We give an illustration of this effect in Figures \ref{fig:gammaCaltech} and  \ref{fig:gammaWiki} using the two social network data sets described in Section \ref{sec:sim}.  A consequence of this behavior is that the price of anarchy is not actually linear in $m$ in Theorems \ref{thm:PoASimple} and \ref{thm:PoAGeneral}, as it may first appear, it turns out to be sublinear. This is illustrated in the context of real social network data in Figures \ref{fig:poaCaltech} and \ref{fig:poaWiki}.  We note that as we are increasing $m$, what we are in fact doing is creating finer allowable partitions of the network.

Next, let us consider the impact of peer effects on the price of anarchy.  Considering the simple setting of Theorem \ref{thm:PoASimple}, we see that if there were no peer effects, this would be equivalent to setting $w(s,t)=0$ for all $s,t$.  This would imply that $\gamma_m^*=0$, and so the price of anarchy is one.  Thus, another interpretation of the price of anarchy in Theorem \ref{thm:PoASimple} is the efficiency lost as a result of peer effects.
\section{Concluding remarks}

In this paper we have focused on many-to-one matchings with peer effects and complementarities.  Typically, results on this topic tend to be negative, either proving that stable matchings may not exist, e.g., \cite{Roth84,matching_book}, or that stable matchings are computationally difficult to find, e.g., \cite{Ronn90}.

In this paper, our goal has been to provide positive results. To this end, we focus on the case when peer effects are the result of an underlying social network, and this restriction on the form of the peer effects allows us to prove that a two-sided exchange-stable matching always exists and that socially optimal matchings are always stable. Further, we provide bounds on the maximal inefficiency (price of anarchy) of any exchange-stable matching and show how this inefficiency depends on the clustering properties of the social network graph.  Interestingly, in our context the price of anarchy has a dual interpretation as characterizing the degree of inefficiency caused by peer effects.

There are numerous examples of many-to-one matchings where the results in this paper can provide insight; one of particular interest to us is the matching of incoming undergraduates to residential houses which happens yearly at Caltech and other universities.  Currently incoming students only report a preference order for houses, and so are incentivized to collude with friends and not reveal their true preferences.  For such settings, the results in this paper highlight the importance of having students report not only their preference order on houses, but also a list of friends with whom they would like to be matched.  Using a combination of these factors the algorithms and efficiency bounds presented in this paper provide a promising approach, for this specific market as well as any general market where peer effects change the space of stable matchings.

The results in the current paper represent only a starting point for research into the interaction of social networks and many-to-one matchings.  There are a number of simplifying assumptions in this work which would be interesting to relax.  For example, the efficiency bounds we have proven consider only a one-sided market, where houses do not have preferences over students, students rate houses similarly, and quotas are exactly met. These assumptions are key to providing simpler bounds, and they certainly are valid in some matching markets; however relaxing these assumptions would broaden the applicability of the work greatly.

{\small
\bibliography{ref}
}
\newpage
\appendix
\section{Existence Proof} \label{app:existence}

\subsection{General case: Local maxima of $\Phi(\mu)$ are two-sided exchange-stable}
To prove Theorem \ref{thm:stable_general}, we first prove a technical lemma (Lemma \ref{lem:existencehelper_general}), from which the theorem follows immediately. The analysis is straightforward and draws its key ideas from the work of \cite{YarivWilson09}, which considers only a one-sided market rather than the two-sided market considered here.

\begin{proof}(Lemma \ref{lem:existencehelper_general})
We begin by calculating the difference in the potential function for a swap matching using \eqref{potFn}: 
\begin{align} & \Phi(\mu_s^t) - \Phi(\mu) = \sum_{h \in H} U_h(\mu_s^t) - U_h(\mu) + \sum_{s \in S} D^s_{\mu_s^t(s)} - D^s_{\mu(s)} \notag \\
& \; + \frac{1}{2} \sum_{s \in S} \left( \sum_{x \in \mu_s^t(s)} w(s,x) \right) - \frac{1}{2} \sum_{s \in S} \left( \sum_{x \in \mu(s)} w(s,x) \right) \end{align}
Expanding and canceling like terms, we have
\begin{align} & \Phi(\mu_s^t) - \Phi(\mu) =   U_h(\mu_s^t) - U_h(\mu) + U_g(\mu_s^t) - U_g(\mu) 
+ D_g^s - D_g^t + D_h^t - D_h^s \notag \\
& \; + \frac{1}{2} \left[ \left( \sum_{x \in g} w(s,x) - w(s,t) + \sum_{x \in h} w(t,x) - w(s,t) \right) \right. \notag \\
& \;\;\;\;\; \left. + \left( \sum_{x \in g} w(x,s)-w(s,t) + \sum_{x \in h} w(x,t) - w(s,t)  \right) \right]  \notag \\
& \; - \frac{1}{2} \left[ \left(\sum_{x \in h} w(s,x) + \sum_{x \in g} w(t,x) \right) + \left( \sum_{x \in h} w(x,s) + \sum_{x \in g} w(x,t) \right) \right] \end{align}
which becomes, due to the symmetry of the social network,
\begin{align} \Phi(\mu_s^t) - \Phi(\mu)&  =  U_h(\mu_s^t) - U_h(\mu) + U_g(\mu_s^t) - U_g(\mu) 
+ D_g^s - D_g^t + D_h^t - D_h^s \notag \\
& + \sum_{x \in g} \left( w(s,x) - w(t,x) \right) + \sum_{x \in h} \left( w(t,x)-w(s,x) \right) - 2 w(s,t) \label{phiDif} \end{align}
Note that if $t$ is a ``hole'', this becomes
\begin{align} \Phi(\mu_s^t) - \Phi(\mu) &  =  U_h(\mu_s^t) - U_h(\mu) + U_g(\mu_s^t) - U_g(\mu) 
+ D_g^s - D_h^s \notag \\ & + \sum_{x \in g} \left( w(s,x) \right) - \sum_{x \in h} \left( w(s,x) \right)  \end{align}

Now, consider a matching $\mu$ and a swap matching $\mu_s^t$ that satisfies (i) and (ii) from the lemma statement.  Without loss of generality, assume that student $s$ strictly improves.  The other student could be either a ``hole'' or a real student that either improves or is indifferent to the swap.  The other cases are symmetric.  Define $\mu(s) = h,$ and $\mu(t) = g$. The change in utility for student $s$ is then
\begin{align*}
0 &< U_s(\mu_s^t) - U_s(\mu) \\
&= D^s_g - D^s_h - \sum_{x \in \mu(h)} w(s,x) + \sum_{x \in \mu(g)} w(s,x) - w(s,t),
\end{align*}
Similarly, for student $t$, we have
\begin{align*}
0 &\leq U_t(\mu_s^t) - U_t(\mu) \\
& = D^t_h - D^t_g - \sum_{x \in \mu(g)} w(t,x) + \sum_{x \in \mu(h)} w(t,x) - w(s,t).
\end{align*}
Adding the above inequalities, we obtain the following:
\begin{align*}
0 &<  D_g^s - D_g^t + D_h^t - D_h^s +  \sum_{x \in \mu(g)} \left(w(s,x) - w(t,x)\right)  + \sum_{x \in \mu(h)} \left(w(t,x) - w(s,x)\right) \\
&\quad -2w(s,t) \\
&:=\: \delta_{s,t}
\end{align*}

On the house side of the market, we have
\begin{equation*}
0 \leq  U_h(\mu_s^t) - U_h(\mu) + U_g(\mu_s^t) - U_g(\mu) := \Delta_H
\end{equation*}
as only houses $h$ and $g$ are affected by the swap and the change in their utilities is non-negative by assumption.  

Thus, using \eqref{phiDif} above, we have 
\begin{equation} \Phi(\mu_s^t) - \Phi(\mu) = \Delta_H + \delta_{s,t} > 0. \end{equation} 
Note that this holds even if $t$ is a ``hole.''   \qed \end{proof}

\begin{proof}(Theorem \ref{thm:stable_general})
Let matching $\mu$ be a local maximum of $\Phi(\mu)$.  Assume, by way of contradiction, that $\mu$ is not two-sided exchange-stable.  Lemma \ref{lem:existencehelper_general} shows that any swap matching that is acceptable to all parties (i.e. satisfies conditions (i) and (ii)) strictly increases $\Phi(\mu)$.  But this contradicts the assumption that $\mu$ is a local maximum. Thus, $\mu$ must be two-sided exchange-stable. \qed
\end{proof}

As the number of matches is finite, the global maximum of the potential function must be two-sided exchange-stable, and, therefore, a two-sided exchange stable matching will always exist.  

\subsection{Specific case: Local maxima of $W(\mu)$ are two-sided exchange-stable}
In this specific case, we assume that house quotas are exactly met (i.e., there are no vacancies or ``holes'') and that all students rate houses according to the same rules (i.e., $D_h^s = D_h^t \; \forall \; s \neq t$). 

\begin{proof}(Lemma \ref{lem:existencehelper})
Consider a matching $\mu$ and a swap matching $\mu_s^t$ that satisfies (i) and (ii) from the lemma statement.  Note that due to the assumption that the house quotas are all met, the swap must be between two students, not a student and a ``hole.''  Without loss of generality, assume that student $s$ strictly improves. The other cases are symmetric.  Define $\mu(s) = h,$ and $\mu(t) = g$. The change in utility for student $s$ is then
\begin{align*}
0 &< U_s(\mu_s^t) - U_s(\mu) \\
&= D_g - D_h - \sum_{x \in \mu(h)} w(s,x) + \sum_{x \in \mu(g)} w(s,x) - w(s,t),
\end{align*}
Similarly, for student $t$, we have
\begin{align*}
0 &\leq U_t(\mu_s^t) - U_t(\mu) \\
& = D_h - D_g - \sum_{x \in \mu(g)} w(t,x) + \sum_{x \in \mu(h)} w(t,x) - w(s,t).
\end{align*}
Adding the above inequalities, we obtain the following:
\begin{align*}
0 &<  \sum_{x \in \mu(g)} \left(w(s,x) - w(t,x)\right)  + \sum_{x \in \mu(h)} \left(w(t,x) - w(s,x)\right) \\
&\quad -2w(s,t) \\
&:=\: \delta_{s,t}
\end{align*}
Continuing, the total change in utility for \emph{all} students is:
\begin{align}
\Delta_S :=& \sum_{x\in S} U_x(\mu_s^t)-\sum_{x\in S} U_x(\mu) \notag  \\
=& \: \delta_{s,t}  + \underbrace{\sum_{x \in \mu(g)} w(x,s) - w(s,t)}_\textrm{gain from $s$ joining $g$} - \underbrace{\sum_{x \in \mu(h)} w(x,s)}_\textrm{loss from $s$ leaving $h$} \notag  \\
& \quad + \underbrace{\sum_{x \in \mu(h)} w(x,t) - w(s,t)}_\textrm{gain from $t$ joining $h$} - \underbrace{\sum_{x \in \mu(g)} w(x,t)}_\textrm{loss from $t$ leaving $g$} \notag \\
=& \: 2 \delta_{s,t}  \label{symm} \\
>&  \: 0 \notag
\end{align}
where line (\ref{symm}) comes from the fact that we assume the social network graph is symmetric.

On the house side of the market, we have
\begin{equation*}
0 \leq  U_h(\mu_s^t) - U_h(\mu) + U_g(\mu_s^t) - U_g(\mu) := \Delta_H
\end{equation*}
as only houses $h$ and $g$ are affected by the swap and the change in their utilities is non-negative by assumption.  Thus, the total social welfare strictly increases:
\begin{equation*}
W(\mu_s^t) - W(\mu) = \Delta_S + \Delta_H > 0
\end{equation*}
\qed \end{proof}

Using Lemma \ref{lem:existencehelper} it is now easy to complete the proof of Theorem \ref{thm:stable_SW}.

\begin{proof}(Theorem \ref{thm:stable_SW})
Let matching $\mu$ be a local maximum of $W(\mu)$.  Lemma \ref{lem:existencehelper} shows that any swap matching that is acceptable to all parties (i.e. satisfies conditions (i) and (ii)) strictly increases the total social welfare.  But this contradicts the assumption that $\mu$ is a local maximum. Thus, $\mu$ must be stable.
\qed \end{proof}

\section{Proofs of PoA Theorems} \label{app:PoA}
We note that these proofs hold for the one-sided market; i.e., when $U_h(\mu) = 0\;\forall h \in H$, where the quotas for the houses are \emph{exactly} satisfied; i.e., there are no ``holes,'' and students value houses according to the same rules; i.e.,  $D_h^s = D_h^t \;  \forall \; s \neq t, \; h \in H$.  Also note that for ease of notation, we use $E$ instead of $|E|$ to represent the total edge weight of the graph in this appendix.

\subsection{Proof of Theorem \ref{thm:PoASimple}} \label{subsec:unweighted}
Throughout the proof we assume that the houses are ordered: i.e., if $g < h$ then $D_g < D_h$. An important tool that we use throughout the proof is a rephrasing of the definition of exchange stability \emph{in the one-sided market case} in terms of a function $\alpha$ as follows.

\begin{definition}
Let $\alpha_\mu(s,g)$ be a function representing the benefit a student $s$ gains by moving to house $g$ under matching $\mu$:
\begin{equation} \label{alpha} \alpha_\mu(s,g) = D_g - D_{\mu(s)} + \sum_{x \in \mu(g)}w(s,x) - \sum_{x \in \mu^2(s)}w(s,x) \end{equation}
\end{definition}

Notice that using the definition above, given a specific swap matching $\mu_s^t$ where $t \in \mu(g)$, we can calculate the difference in utility for the involved student $s$ as
\begin{equation*} U_s(\mu_s^t) - U_s(\mu) = \alpha_\mu(s,g) - w(s,t) \end{equation*}
because $\sum_{x \in \mu_s^t(g)}w(s,x) = \sum_{x \in \mu(g)}w(s,x) - w(s,t)$.

The definition of $\alpha$ also provides a useful new phrasing of the definition of exchange stability, which is equivalent to that of Definition \ref{psDef_orig} when the market is one-sided, i.e, when $U_h(\mu) = 0 \; \forall h \in H$.  Note that we are only considering the Price of Anarchy for the one-sided market here -- we plan to generalize these results for the two-sided case in future work.

\begin{definition} \label{psDef}
A matching $\mu$ is \textbf{exchange stable (ES)} in the one-sided (students-only) housing assignment market if and only if for all pairs of students $s \in \mu(h)$ and $t \in \mu(g)$, at least one of the following conditions holds:
\begin{description}
  \item[Condition 1:] $s$ doesn't want to swap, i.e.,  $\alpha_\mu(s,g) < w(s,t).$
  \item[Condition 2:] $t$ doesn't want to swap, i.e., $\alpha_\mu(t,h) < w(s,t).$
  \item[Condition 3:] $s$ and $t$ are indifferent, i.e., $\alpha_\mu(s,g) =$\\ $ \alpha_\mu(t,h) = w(s,t).$
\end{description}
\end{definition}

Using the above rephrasing of the definition of exchange stability, we now continue with the proof of Theorem \ref{thm:PoASimple}.  In order to prove an upper bound on the price of anarchy, we prove a lower bound on $\gamma_m(\mu)$ when $\mu$ is stable.  To prove this lower bound, we first prove an upper bound on the number of cross edges ($E_{hg}=E_{gh}$) in the following lemma.

\begin{lemma} \label{E_hg_bound}
Let $w(s,t) \in \left\{0,1\right\}$ for all students $s,t$ and let $q_h\geq 2, D_h \in \mathbb{Z}^+\cup\{0\}$ for all $h$.  Let $q_h = q$ for all $h$ and/or $D_h = D$ for all $h$. If a matching $\mu$ is stable, then for all houses $h$ and $g$,
\begin{equation}
E_{hg} \leq \max(q_h (D_h - D_g),q_g(D_g - D_h)) + 2(E_{hh} + E_{gg})
\end{equation}
\end{lemma}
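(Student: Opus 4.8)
The plan is to argue entirely inside the one-sided reformulation of exchange stability (Definition~\ref{psDef}) and to reduce the lemma, via a summation identity for $\alpha$, to a bound on how many students can simultaneously wish to relocate. Since both sides of the inequality are symmetric under interchanging $h$ and $g$ (note $E_{hg}=E_{gh}$, and the $\max$ is symmetric), I would assume without loss of generality that $D_h\ge D_g$, so that the target becomes $E_{hg}\le q_h(D_h-D_g)+2(E_{hh}+E_{gg})$. Writing $d_h(s)=\sum_{x\in\mu(h)}w(s,x)$ and $d_g(s)=\sum_{x\in\mu(g)}w(s,x)$ for the number of friends a student has inside each house, elementary double counting gives $\sum_{s\in\mu(h)}d_g(s)=E_{hg}=\sum_{t\in\mu(g)}d_h(t)$, $\sum_{s\in\mu(h)}d_h(s)=2E_{hh}$, and $\sum_{t\in\mu(g)}d_g(t)=2E_{gg}$. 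Since $\alpha_\mu(s,g)=D_g-D_h+d_g(s)-d_h(s)$, summing over $s\in\mu(h)$ yields $\sum_{s\in\mu(h)}\alpha_\mu(s,g)=E_{hg}-2E_{hh}-q_h(D_h-D_g)$, so it suffices to prove the clean inequality $\sum_{s\in\mu(h)}\alpha_\mu(s,g)\le 2E_{gg}$.

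The heart of the argument is a structural consequence of stability. I would introduce $A=\{s\in\mu(h):\alpha_\mu(s,g)>0\}$ and $B=\{t\in\mu(g):\alpha_\mu(t,h)>0\}$, the students who strictly benefit from moving to the other house. If $B=\emptyset$ then every $t$ has $d_h(t)\le d_g(t)$ and $E_{hg}=\sum_t d_h(t)\le 2E_{gg}$; if $A=\emptyset$ then every $s$ has $d_g(s)\le d_h(s)+(D_h-D_g)$ and $E_{hg}\le 2E_{hh}+q_h(D_h-D_g)$ — either case gives the bound immediately. In the remaining case I would apply Definition~\ref{psDef} to every pair $(s,t)\in A\times B$. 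A non-friend pair ($w(s,t)=0$) would constitute an approved swap, since both parties strictly gain, contradicting stability; hence $A\times B$ must be complete bipartite in the friendship graph. For a friend pair ($w(s,t)=1$) the swap is approved unless both gains vanish, which by the integrality of the $D$ values and of the weights forces $\alpha_\mu(s,g)=\alpha_\mu(t,h)=1$. Consequently every $s\in A$ satisfies $d_g(s)=d_h(s)+(D_h-D_g)+1$ and every $t\in B$ satisfies $d_g(t)=d_h(t)+(D_h-D_g)-1$.

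With this structure in hand I would finish by counting. Splitting $\sum_{s\in\mu(h)}d_g(s)$ according to membership in $A$ — using the exact identity on $A$ and $\alpha_\mu(s,g)\le 0$ off $A$ — gives $E_{hg}\le 2E_{hh}+q_h(D_h-D_g)+|A|$, equivalently $\sum_{s\in\mu(h)}\alpha_\mu(s,g)\le|A|$; it therefore remains to show $|A|\le 2E_{gg}$. Here the complete bipartite structure forces each $t\in B$ to have $d_h(t)\ge|A|$ friends in $h$, hence $d_g(t)=d_h(t)+(D_h-D_g)-1\ge|A|+(D_h-D_g)-1$ friends inside $g$; summing over the nonempty set $B$ and comparing with $\sum_{t\in B}d_g(t)\le 2E_{gg}$ delivers $|A|\le 2E_{gg}$ whenever $D_h>D_g$ (the extra $+(D_h-D_g)$ supplies exactly the needed slack), and also whenever $|A|\ge 2$ and $|B|\ge 2$.

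The hard part is the boundary case $D_h=D_g$ with $|A|=1$ (or, symmetrically, $|B|=1$), where the degree comparison above only yields the off-by-one estimate $|A|\le 2E_{gg}+1$. To remove the spurious $+1$ I would treat this corner by hand: assuming $E_{gg}=0$, the exact identities force $B=\mu(g)$ and pin down the few relevant friendships, after which counting the \emph{reciprocal} internal degrees in $h$ (each friend of the single relocating student itself contributes to $2E_{hh}$) together with the standing hypothesis $q_h\ge 2$ recovers $E_{hg}\le 2E_{hh}$ exactly. This is precisely where the integrality of the data and the quota assumption $q_h\ge 2$ — carried alongside the condition that $q_h=q$ for all $h$ and/or $D_h=D$ for all $h$ — do their real work; the main counting is otherwise routine.
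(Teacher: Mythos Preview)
Your approach is essentially correct and genuinely different from the paper's. The paper proves Lemma~\ref{E_hg_bound} by a three-way case split on whether some $\alpha_\mu(s,g)>1$ or some $\alpha_\mu(t,h)>1$; the hard Case~3 is handled by the six-set partition $S_1,S_0,S_{-1},T_1,T_0,T_{-1}$ of Lemma~\ref{E_hg_3} together with a combinatorial inequality (the nonnegativity of the auxiliary function $f$), which is where the hypothesis ``$q_h=q$ for all $h$ and/or $D_h=D$ for all $h$'' is invoked. You instead split on whether $A$ or $B$ is empty, and in the main case use stability directly to force $\alpha\equiv 1$ on $A$ and on $B$ and $A\times B$ complete bipartite, then finish by the clean degree estimate $|B|\bigl(|A|+(D_h-D_g)-1\bigr)\le \sum_{t\in B}d_g(t)\le 2E_{gg}$. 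This is more elementary than the paper's route --- it avoids the six-set bookkeeping and the $f$-inequality entirely --- and, interestingly, does not appear to use the ``equal quotas or equal desirabilities'' hypothesis at all, only integrality and $q_h\ge 2$.

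There is, however, a real gap in your boundary case. When $D_h=D_g$, $|A|=1$ and $E_{gg}=0$, you assert that ``the exact identities force $B=\mu(g)$,'' but the identities you have derived (namely $\alpha\equiv 1$ on $A\cup B$ and $A\times B$ complete bipartite) do \emph{not} force this: they say nothing about students $t$ with $\alpha_\mu(t,h)=0$. What is missing is precisely the $A\times T_0$ constraint that the paper's six-set analysis makes explicit: for $s_0\in A$ and $t$ with $\alpha_\mu(t,h)=0$ and $w(s_0,t)=0$, the swap $\mu_{s_0}^{t}$ is approved (since $s_0$ strictly gains and $t$ is indifferent), so stability forces $w(s_0,t)=1$. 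With $E_{gg}=0$ any $t\notin B$ has $d_h(t)\le d_g(t)=0$, hence $\alpha_\mu(t,h)=0$; the missing constraint then gives $d_h(t)\ge 1$, a contradiction, so indeed $B=\mu(g)$. Once you add this one line, your reciprocal-degree count goes through: $E_{hg}=q_g$, $d_h(s_0)=q_g-1$, each of $s_0$'s $q_g-1$ friends in $h$ contributes at least $1$ to $2E_{hh}$, so $2E_{hh}\ge 2(q_g-1)\ge q_g$ using $q_g\ge 2$. (Note it is $q_g\ge 2$, not $q_h\ge 2$, that is used here --- though of course both hold by hypothesis.)
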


\begin{proof}
Using the conditions of stability from Definition \ref{psDef} and Lemmas \ref{E_hg_1} and \ref{E_hg_3} as summarized below and proved in Appendix \ref{appProofs}, we have
\\
\emph{Case 1:}
If there exists $s \in \mu(h)$ such that $\alpha_\mu(s,g) > 1$ then, by Lemma \ref{E_hg_1}, if $\mu$ is stable it follows that
\begin{equation*} E_{gh} \leq q_g (D_g-D_h) + 2 E_{gg} \end{equation*}
\emph{Case 2:}
If there exists $t \in \mu(g)$ such that $\alpha_\mu(t,h) > 1$ then, by Lemma \ref{E_hg_1}, if $\mu$ is stable it follows that
\begin{equation*} E_{hg} \leq q_h (D_h-D_g) + 2 E_{hh} \end{equation*}
\emph{Case 3:}
If there does not exist $s \in \mu(h)$ such that $\alpha_\mu(s,g) > 1$ and there does not exist $t \in \mu(g)$ such that $\alpha_\mu(t,h) > 1$ then, by Lemma \ref{E_hg_3}, if $\mu$ is stable it follows that
\begin{align*} E_{hg} \leq & \max (q_h(D_h-D_g),q_g(D_g-D_h)) + 2 (E_{hh} + E_{gg}) \end{align*}

Given any matching $\mu$ in the student-only market, it must fall into one of the three cases above.  Thus, if $\mu$ is stable, it follows that one of the three bounds above holds.  Because the edges are undirected, $E_{hg} = E_{gh}$, we can combine the three bounds to conclude that if $\mu$ is stable,
\begin{equation*}
E_{hg}  \leq  \max(q_h(D_h-D_g),q_g(D_g-D_h)) + 2 (E_{hh} + E_{gg})
\end{equation*}
\end{proof}

\noindent Next, we use the above to prove a lower bound on $\gamma_m(\mu)$.

\begin{lemma}  \label{gammaBound}
Let $w(s,t) \in \left\{0,1\right\}$ and let $q_h\geq 2, D_h \in \mathbb{Z}^+\cup\{0\}$ for all $h$.  Let $q_h = q$ for all $h$ and/or $D_h = D$ for all $h$. If a matching $\mu$ is stable, then
\begin{equation}
\gamma_m(\mu) \geq \max \left( \frac{E-\sum_{g<h} q_h (D_h-D_g)}{(2m-1)E}, 0 \right)
\end{equation}
\end{lemma}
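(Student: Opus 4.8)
The plan is to turn the per-pair upper bound on cross-house edge weight supplied by Lemma \ref{E_hg_bound} into a lower bound on the within-house weight $E_{in}(\mu)$, and hence on $\gamma_m(\mu) = E_{in}(\mu)/E$. The starting observation is that the total edge weight of $G$ decomposes exactly into a within-house part and a between-house part:
\[
E = E_{in}(\mu) + \sum_{g < h} E_{hg}(\mu),
\]
where the sum ranges over unordered pairs of distinct houses (each edge of $G$ either lies inside a single house or joins two distinct houses, and $E_{hg}(\mu) = E_{gh}(\mu)$ counts the latter once per pair). Consequently $\sum_{g<h} E_{hg}(\mu) = E - E_{in}(\mu)$, so bounding the cross-house weight from above is equivalent to bounding $E_{in}(\mu)$ from below.

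Next I would sum the stability bound of Lemma \ref{E_hg_bound} over all pairs $g < h$. Under the ordering convention that $g < h$ implies $D_g < D_h$, we have $D_h - D_g > 0$, so $q_h(D_h - D_g) \geq 0 > q_g(D_g - D_h)$ and the maximum in the lemma is always realized by $q_h(D_h - D_g)$. This gives
\[
\sum_{g<h} E_{hg}(\mu) \leq \sum_{g<h} q_h (D_h - D_g) + 2 \sum_{g<h} \bigl( E_{hh}(\mu) + E_{gg}(\mu) \bigr).
\]
The key bookkeeping step is to count how often each within-house term is produced: a fixed $E_{kk}(\mu)$ appears as an ``$E_{hh}$'' term for every pair with $h = k$ and as an ``$E_{gg}$'' term for every pair with $g = k$, for a total of $m-1$ occurrences across the two roles. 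Hence $\sum_{g<h}(E_{hh}(\mu) + E_{gg}(\mu)) = (m-1)\sum_{h} E_{hh}(\mu) = (m-1) E_{in}(\mu)$.

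Substituting $\sum_{g<h} E_{hg}(\mu) = E - E_{in}(\mu)$ and the counting identity into the displayed inequality yields
\[
E - E_{in}(\mu) \leq \sum_{g<h} q_h (D_h - D_g) + 2(m-1) E_{in}(\mu),
\]
which rearranges to $E - \sum_{g<h} q_h(D_h - D_g) \leq (2m-1) E_{in}(\mu)$. Dividing by $(2m-1)E$ and recalling $\gamma_m(\mu) = E_{in}(\mu)/E$ delivers the stated bound, and taking the maximum with $0$ is immediate since $\gamma_m(\mu) \geq 0$ for every matching. I expect the only delicate step to be the combinatorial counting of the $E_{hh}(\mu)$ contributions together with fixing the orientation of the $\max$ from the house ordering; the remainder is linear rearrangement. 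The hypothesis $q_h = q$ for all $h$ and/or $D_h = D$ for all $h$ plays no direct role here beyond making Lemma \ref{E_hg_bound} applicable.
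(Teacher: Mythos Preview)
Your proposal is correct and follows essentially the same argument as the paper: decompose $E$ into within- and between-house parts, sum Lemma~\ref{E_hg_bound} over pairs $g<h$ using the house ordering to resolve the $\max$, count that each $E_{hh}$ appears $m-1$ times so $\sum_{g<h}(E_{hh}+E_{gg})=(m-1)E_{in}(\mu)$, and rearrange. The only minor quibble is that your justification ``$q_h(D_h-D_g)\geq 0 > q_g(D_g-D_h)$'' uses a strict inequality that fails when $D_g=D_h$ (e.g.\ in the $D_h=D$ case), but the conclusion $\max(\cdot,\cdot)=q_h(D_h-D_g)$ still holds there since both terms are zero.
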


\begin{proof}
\begin{align}
E_{in} (\mu) & = E - \sum_{g<h} E_{gh} \notag \\
& \geq E - \sum_{g<h} \left( q_h(D_h-D_g)+ 2 (E_{hh} + E_{gg}) \right) \label{maxBound} \\
& = E - 2(m-1)E_{in} (\mu) - \sum_{g<h} \left( q_h(D_h-D_g) \right) \notag
\end{align}
where we have used the assumption that the houses are ordered in line \eqref{maxBound}.
Solving for $E_{in} (\mu)$ gives
\begin{equation*}
E_{in} (\mu) \geq \frac{E-\sum_{g<h}q_h(D_h-D_g)}{2m-1}.
\end{equation*}
Thus,
\begin{equation*}
\gamma_m(\mu) = \frac{E_{in} (\mu)}{E} \geq \frac{E-\sum_{g<h}q_h(D_h-D_g)}{(2m-1)E}.
\end{equation*}
Note that the above bound is only useful when the numerator is positive; otherwise, the bound becomes negative.  However, it is immediate to see that $\gamma_m(\mu) \geq 0$ always, as $E_{in} (\mu)$ and $E$ are non-negative, which completes the proof.
\end{proof}

Finally, we can complete the proof of Theorem \ref{thm:PoASimple} using the above lemmas.  There are two cases to consider, depending on the value of $E:\\$

\emph{Case 1:} $E > \sum_{g>h} q_h(D_h-D_g) $\\

Plugging the bound from Lemma \ref{gammaBound} into \eqref{PoASpec} gives
\begin{align*}
\frac{\max_\mu W(\mu)}{\min_\text{$\mu$ is stable} W(\mu)} & = \frac{Q + \gamma_m^*}{Q + \gamma_m(\mu)} \\
& \leq \tfrac{\tfrac{\sum_{h \in H}q_h D_h}{2E} + \gamma_m^*}{\tfrac{\sum_{h \in H}q_h D_h}{2E} + \tfrac{E-\sum_{g<h}q_h (D_h-D_g)}{(2m-1)E}} \\
& = \tfrac{(2m-1)\sum_{h \in H}{q_h D_h} + 2(2m-1)E\gamma_m^*}{(2m-1)\sum_{h \in H}{q_h D_h} + 2E-2\sum_{g<h}{q_h (D_h-D_g)}}
\end{align*}
Using Lemma \ref{m-1 bound} from the Appendix to substitute for $\sum_{h \in H}{q_h D_h}$ is then enough to complete the proof in this case, after some algebra using the fact that $\gamma_m^* \leq 1$.

\emph{Case 2:} $E \leq \sum_{g<h} q_h(D_h-D_g)\\$

In this case, Lemma \ref{gammaBound} states that $\gamma_m(\mu) \geq 0$.  Using this bound and plugging into \eqref{PoASpec} gives
\begin{equation} \label{eq:PoAQ}
\frac{\max_\mu W(\mu)}{\min_\text{$\mu$ is stable} W(\mu)} = \frac{Q + \gamma_m^*}{Q + \gamma_m(\mu)} \leq  1 + \frac{\gamma_m^*}{Q}
\end{equation}

Note that $Q > 0$ as long as $E > 0$ because we are given that $E \leq \sum_{g<h} q_h(D_h-D_g)$ in this case. Further, note that the case of $E = 0$ is trivial because all matchings have the same welfare and so the price of anarchy is $1$.

Using $E\leq \sum_{g<h} q_h(D_h-D_g)$ we have
\begin{equation} \label{eq:QBound}
Q \geq \frac{\sum_{h \in H}q_h D_h}{2 \sum_{g<h} q_h(D_h-D_g)}.
\end{equation}
Combining \eqref{eq:PoAQ} and \eqref{eq:QBound} and again using Lemma \ref{m-1 bound} from the appendix is then enough to complete the proof in this case, after some algebra.

One final remark about this proof is that in the special case of $D_h = 0$ a tighter bound holds. Specifically, the price of anarchy is bounded by $(2m-1) \gamma_m^*$ in this case.

\subsection{Proof of Theorem \ref{thm:PoAGeneral}} \label{subsec:weighted}
The proof of Theorem \ref{thm:PoAGeneral} follows the same structure as the proof of Theorem \ref{thm:PoASimple}, with a few added complexities that cause the bound to become weaker.

To begin, we again derive a bound on the cross-edges.

\begin{lemma} \label{E_hg_bound_general}
Let $w(s,t) \in \mathbb{R}^+\cup\{0\}$ for all students $s,t$ and let $D_h \in \mathbb{R}^+\cup\{0\}$ for all houses $h$.   If a matching $\mu$ is stable,  then for all houses $h$ and $g$,
\begin{align*}
E_{hg} \leq & \max(q_h (D_h - D_g),q_g(D_g - D_h)) + 2(E_{hh} + E_{gg}) \\
& + q_{max} w_{max}.
\end{align*}
\end{lemma}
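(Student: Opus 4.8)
The plan is to mirror the structure of the proof of Lemma~\ref{E_hg_bound}, replacing the integrality threshold $1$ by the maximum edge weight $w_{max}$ and carrying through the swap-edge terms $w(s,t)$ that integrality previously let us discard. Throughout I work under the standing assumptions of Appendix~\ref{app:PoA} (quotas exactly met, so $|\mu(h)|=q_h$ for every house, and $D_h^s=D_h^t$) and use the one-sided characterization of stability from Definition~\ref{psDef} through the function $\alpha$. First I would record the expansion $\alpha_\mu(s,g)=D_g-D_h+\sum_{x\in\mu(g)}w(s,x)-\sum_{x\in\mu(h)}w(s,x)$ for $s\in\mu(h)$, and the symmetric formula for $t\in\mu(g)$, since these turn each stability condition into a linear inequality in the cross-edges.

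Next I would split into three cases according to the threshold $w_{max}$, exactly paralleling the simple case. In Case~1, suppose some $s\in\mu(h)$ has $\alpha_\mu(s,g)>w_{max}$. Since $w(s,t)\le w_{max}<\alpha_\mu(s,g)$ for every $t\in\mu(g)$, Conditions~1 and~3 of Definition~\ref{psDef} fail for each pair $(s,t)$, so stability forces Condition~2, namely $\alpha_\mu(t,h)<w(s,t)$ for all $t\in\mu(g)$. Expanding, rearranging, and summing over $t\in\mu(g)$ gives $E_{hg}\le q_g(D_g-D_h)+2E_{gg}+\sum_{t\in\mu(g)}w(s,t)$, and bounding the residual sum (the edges of the single fixed student $s$ into $g$) by $q_g w_{max}\le q_{max}w_{max}$ yields the desired bound. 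Case~2 is symmetric, exchanging the roles of $h,g$ and $s,t$. In Case~3, neither strong preference exists, so $\alpha_\mu(s,g)\le w_{max}$ for all $s\in\mu(h)$ and $\alpha_\mu(t,h)\le w_{max}$ for all $t\in\mu(g)$; summing the first family over $s$ gives $E_{hg}\le q_h(D_h-D_g)+2E_{hh}+q_{max}w_{max}$ and the second over $t$ gives $E_{hg}\le q_g(D_g-D_h)+2E_{gg}+q_{max}w_{max}$. Since the houses are ordered, whichever of $q_h(D_h-D_g)$ and $q_g(D_g-D_h)$ is nonnegative equals the maximum of the two, so selecting the corresponding bound reproduces the $\max(\cdot,\cdot)+2(E_{hh}+E_{gg})$ form. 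Finally, observing that the Case~1 and Case~2 bounds are each dominated by the stated right-hand side, and using $E_{hg}=E_{gh}$, the three cases combine into the claimed inequality.

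The main obstacle — and the sole reason the bound is weaker than in Lemma~\ref{E_hg_bound} — is the loss of integrality. In the $0/1$ setting the inequality $\alpha_\mu(t,h)<w(s,t)\le 1$ with integer-valued $\alpha$ collapses to $\alpha_\mu(t,h)\le 0$, eliminating the swap-edge term entirely; with real weights no such rounding is available and the term $\sum_t w(s,t)$ survives the summation. The only genuinely nontrivial point is therefore that this residual term is controlled by $q_{max}w_{max}$ rather than by an additive error growing with the number of students: this holds precisely because in each case the surviving sum ranges over the at most $q_{max}$ edges incident to a single fixed student, so it contributes $q_{max}w_{max}$ once and does not accumulate across houses.
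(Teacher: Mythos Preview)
Your proposal is correct and follows essentially the same three-case structure as the paper's proof, which invokes Lemmas~\ref{E_hg_1_general} and~\ref{E_hg_3_general} for the respective cases. The only difference is cosmetic: you split cases on the threshold $\alpha_\mu(s,g)>w_{max}$, whereas the paper splits on ``$\alpha_\mu(s,g)>w(s,t)$ for all $t\in\mu(g)$''; since both arguments ultimately bound the residual swap-edge sum by $q_{max}w_{max}$, the two derivations coincide, and your version is arguably a bit cleaner.
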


\begin{proof}
Using the conditions of stability from Definition \ref{psDef} for the one-sided market and Lemmas \ref{E_hg_1_general} and \ref{E_hg_3_general} from Appendix \ref{appProofs}, we have three cases.
\\
\emph{Case 1:}
If there exists $s \in \mu(h)$ such that $\alpha_\mu(s,g) > w(s,t)$ for all $t \in \mu(g)$ then, by Lemma \ref{E_hg_1_general}, if $\mu$ is stable, it follows that
\begin{equation*} E_{gh} \leq q_g (D_g-D_h) + 2 E_{gg} + q_g w_{max}. \end{equation*}
\emph{Case 2:}
If there exists $t \in \mu(g)$ such that $\alpha_\mu(t,h) > w(s,t)$ for all $s \in \mu(h)$ then, by Lemma \ref{E_hg_1_general}, if $\mu$ is stable, it follows that
\begin{equation*} E_{hg} \leq q_h (D_h-D_g) + 2 E_{hh} +q_h w_{max}.\end{equation*}
\emph{Case 3:}
If there does not exist $s \in \mu(h)$ such that $\alpha_\mu(s,g) > w(s,t)$ and there does not exist $t \in \mu(g)$ such that $\alpha_\mu(t,h) > w(s,t)$, for all $t \in \mu(g), s \in \mu(h)$ respectively, then,  by Lemma \ref{E_hg_3_general},  if $\mu$ is stable, it follows that
\begin{align*} E_{hg} \leq  & \max (q_h(D_h-D_g),q_g(D_g-D_h))  + 2 (E_{hh} + E_{gg}) \\
& + q_{max} w_{max}. \end{align*}

Given any matching $\mu$, it must fall into one of the three cases above.  Thus, if $\mu$ is exchange-stable, it follows that one of the three bounds above holds.  Because the edges are undirected, $E_{hg} = E_{gh}$, we can combine the three bounds to conclude that, if $\mu$ is stable,
\begin{align*}
E_{hg}  \leq & \max(q_h(D_h-D_g),q_g(D_g-D_h)) + 2 (E_{hh} + E_{gg}) \\
&  + q_{max} w_{max}
\end{align*} \end{proof}

\noindent Next, we use the above to prove a lower bound on $\gamma_m(\mu)$.

\begin{lemma}  \label{gammaBound_general}
Let $w(s,t) \in \mathbb{R}^+\cup\{0\}$.  If a matching $\mu$ is stable, then
\begin{equation*}
\gamma_m(\mu) \geq \max \left( \tfrac{E-\sum_{g<h} q_h (D_h-D_g) - {m \choose 2} q_{max} w_{max}}{(2m-1)E}, 0 \right)
\end{equation*}
\end{lemma}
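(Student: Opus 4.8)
The plan is to mirror the proof of Lemma \ref{gammaBound} exactly, carrying the single extra additive term $q_{max}w_{max}$ through the computation. First I would write $E_{in}(\mu) = E - \sum_{g<h} E_{gh}$, splitting the total edge weight into within-house edges and cross edges. Since Lemma \ref{E_hg_bound_general} already supplies a per-pair upper bound on each $E_{hg}$ for a stable matching, I would substitute that bound into the sum over all ordered pairs $g < h$. Invoking the ordering convention on the houses (if $g < h$ then $D_g \leq D_h$) resolves the maximum in Lemma \ref{E_hg_bound_general}, so that $\max(q_h(D_h-D_g), q_g(D_g-D_h))$ collapses to $q_h(D_h - D_g)$ for each pair, yielding $E_{in}(\mu) \geq E - \sum_{g<h}\left(q_h(D_h-D_g) + 2(E_{hh}+E_{gg}) + q_{max}w_{max}\right)$.

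The key bookkeeping step is to count multiplicities in the sum over pairs. Each within-house weight $E_{hh}$ appears once for every other house, i.e. $m-1$ times, so $\sum_{g<h} 2(E_{hh}+E_{gg}) = 2(m-1)\sum_{h} E_{hh} = 2(m-1)E_{in}(\mu)$, exactly as in the simple case. The only genuinely new contribution is the constant term: there are $\binom{m}{2}$ unordered pairs of houses, so $\sum_{g<h} q_{max}w_{max} = \binom{m}{2}q_{max}w_{max}$. Collecting the $E_{in}(\mu)$ terms onto one side gives $(2m-1)E_{in}(\mu) \geq E - \sum_{g<h}q_h(D_h-D_g) - \binom{m}{2}q_{max}w_{max}$, and dividing through by $E$ produces the claimed fraction $\gamma_m(\mu) = E_{in}(\mu)/E \geq \tfrac{E - \sum_{g<h}q_h(D_h-D_g) - \binom{m}{2}q_{max}w_{max}}{(2m-1)E}$. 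The outer maximum with $0$ then follows for free, since $\gamma_m(\mu) \geq 0$ trivially (both $E_{in}(\mu)$ and $E$ are nonnegative) whenever the numerator would otherwise be negative.

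Since every step is a routine substitution inherited from Lemma \ref{gammaBound}, there is no real obstacle beyond this counting. The one place to be careful is that the extra $q_{max}w_{max}$ correction in Lemma \ref{E_hg_bound_general} is charged \emph{per pair} of houses rather than per house, which is why it scales with $\binom{m}{2}$ and not with $m-1$; this is precisely what makes the correction appear as $\binom{m}{2}q_{max}w_{max}$ in the numerator of the bound.
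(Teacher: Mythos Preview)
Your proposal is correct and follows the paper's proof essentially line for line: write $E_{in}(\mu)=E-\sum_{g<h}E_{gh}$, substitute the per-pair bound from Lemma~\ref{E_hg_bound_general}, use the house ordering to collapse the $\max$, recognise that $\sum_{g<h}2(E_{hh}+E_{gg})=2(m-1)E_{in}(\mu)$ and that the constant term contributes $\binom{m}{2}q_{max}w_{max}$, then solve for $E_{in}(\mu)$ and divide by $E$, with the $\max$ against $0$ coming from the trivial nonnegativity of $\gamma_m(\mu)$. There is nothing to add.
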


\begin{proof}
\begin{align}
&E_{in} (\mu) = E - \sum_{g<h} E_{gh} \notag \\
& \quad \geq E - \sum_{g<h}( q_h(D_h-D_g)+ 2 (E_{hh} + E_{gg})  + q_{max} w_{max} ) \label{maxBound_gen} \\
& \quad = E - 2(m-1)E_{in} (\mu) - \sum_{g<h} \left( q_h(D_h-D_g) \right)\notag \\
& \qquad- {m \choose 2} q_{max} w_{max}\notag
\end{align}
where line \eqref{maxBound_gen} follows from the assumption that the houses are ordered.

Solving for $E_{in} (\mu)$ gives
\begin{equation*}
E_{in} (\mu) \geq \tfrac{E-\sum_{g<h}q_h(D_h-D_g) - {m \choose 2} q_{max} w_{max} }{2m-1},  \end{equation*}
and thus
\begin{equation*}
\gamma_m(\mu) = \frac{E_{in} (\mu)}{E} \geq \tfrac{E-\sum_{g<h}q_h(D_h-D_g) - {m \choose 2} q_{max} w_{max} }{(2m-1)E}.
\end{equation*}

This bound is only relevant when $E > \sum_{g>h} q_h(D_h-D_g)+ {m \choose 2} q_{max} w_{max}$. Otherwise, the bound becomes negative, in which case we use the fact that  $\gamma_m(\mu) \geq 0$ always, as $E_{in} (\mu)$ and $E$ are non-negative.
\end{proof}

Finally, we can complete the proof of Theorem \ref{thm:PoAGeneral} using the lemmas above.  There are two cases to consider, depending on the value of $E$.

\emph{Case 1: $E > \sum_{g<h} q_h (D_h - D_g) + {m \choose 2} q_{max} w_{max}$}

Plugging the bound from Lemma \ref{gammaBound_general} into \eqref{PoASpec} gives
\begin{align*}
&\tfrac{\max_\mu W(\mu)}{\min_\text{$\mu$ is stable} W(\mu)} = \frac{Q + \gamma_m^*}{Q + \gamma_m(\mu)} \\
& \quad \leq \tfrac{\tfrac{\sum_{h \in H}q_h D_h}{2E} + \gamma_m^*}{\tfrac{\sum_{h \in H}q_h D_h}{2E} + \tfrac{E-\sum_{g<h}q_h (D_h-D_g) - {m \choose 2} q_{max} w_{max}}{(2m-1)E}} \\
& \quad = \tfrac{(2m-1)\sum_{h \in H}{q_h D_h} + 2(2m-1)E\gamma_m^*}{(2m-1)\sum_{h \in H}{q_h D_h} + 2E-2\sum_{g<h}{q_h (D_h-D_g)} - 2{m \choose 2} q_{max} w_{max}}
\end{align*}

Using Lemma \ref{m-1 bound} to substitute for $\sum_{h \in H}{q_h D_h}$, the bound becomes, after some algebra,
\begin{align*}
& \tfrac{\max_\mu W(\mu)}{\min_\text{$\mu$ is stable} W(\mu)} \\
& \quad \leq (1+2(m-1)\gamma_m^*) \times \\
& \qquad \left( \tfrac{2(m-1)E + \sum_{g<h} q_h(D_h-D_g)}{2(m-1) E + \sum_{g<h}q_h(D_h-D_g) - 2(m-1) {m \choose 2} q_{max} w_{max}}\right).
\end{align*}

Using $E \geq \sum_{g<h} q_h (D_h - D_g) + {m \choose 2} q_{max} w_{max}$, we have, after some algebra,
\begin{align*}
& \tfrac{\max_\mu W(\mu)}{\min_\text{$\mu$ is stable} W(\mu)} \\
& \quad \leq (1+2(m-1)\gamma_m^*)  \left( 1+ \tfrac{2(m-1) {m \choose 2} q_{max} w_{max}}{(2m-1) \sum_{g<h} q_h(D_h - D_g)}\right).
\end{align*}

\emph{Case 2: $E \leq \sum_{g<h} q_h (D_h - D_g) + {m \choose 2} q_{max} w_{max}$}

In this case, Lemma \ref{gammaBound_general} states that $\gamma_m(\mu) \geq 0$. Using this bound and plugging into \eqref{PoASpec}, we have
\begin{equation*}
\tfrac{\max_\mu W(\mu)}{\min_\text{$\mu$ is stable} W(\mu)}  = \frac{Q + \gamma_m^*}{Q + \gamma_m(\mu)} \leq  1 + \frac{\gamma_m^*}{Q}.
\end{equation*}

Using $E \leq \sum_{g<h} q(D_h-D_g) + {m \choose 2} q_{max} w_{max}$ we have
\begin{equation*}
Q \geq \frac{\sum_{h \in H}q_h D_h}{2 \sum_{g<h} q_h(D_h-D_g) + 2{m \choose 2} q_{max} w_{max}}.
\end{equation*}
and so the price of anarchy becomes, again using Lemma \ref{m-1 bound},
\begin{equation*}
\tfrac{\max_\mu W(\mu)}{\min_\text{$\mu$ is stable} W(\mu)}  
\leq 1 + 2(m-1)\gamma_m^* + \frac{2{m \choose 2}   q_{max}w_{max}}{\sum_{h \in H} q_h D_h}.
\end{equation*}
We can combine the two cases into one (looser) bound,
\begin{equation*}
\tfrac{\max_\mu W(\mu)}{\min_\text{$\mu$ is stable} W(\mu)}  \leq 1+ 2(m-1)\gamma_m^* + \frac{2 (m-1) q_{max}w_{max}}{D_\Delta}.
\end{equation*} 

\section{Technical Lemmas}
\label{appProofs}

This appendix includes the lemmas used in the proofs of Theorems \ref{thm:PoASimple} and \ref{thm:PoAGeneral}.

\begin{lemma} \label{E_hg_1}
Let $w(s,t) \in \left\{0,1\right\}$ for all students $s,t$ and let $D_h \in \mathbb{Z}^+\cup\{0\}$ for all $h$. Let  $\mu$ be a stable matching.  If there exists a student $s \in \mu(h)$ such that $\alpha_\mu(s,g) > 1$ for some other house $g$, then
$E_{gh} \leq q_g(D_g-D_h) + 2 E_{gg}$.
\end{lemma}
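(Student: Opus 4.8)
The plan is to exhibit a single ``witness'' student $s_0 \in \mu(h)$ with $\alpha_\mu(s_0,g) > 1$ and use the fact that $s_0$ refuses to swap with \emph{every} member of $g$ to force a uniform bound on how attractive $h$ looks to each student currently in $g$. The first step is to sharpen the hypothesis by integrality: since $D_h,D_g \in \mathbb{Z}^+\cup\{0\}$ and $w(s,t)\in\{0,1\}$, the quantity $\alpha_\mu(s,g)$ in \eqref{alpha} is always an integer, so $\alpha_\mu(s_0,g) > 1$ upgrades to $\alpha_\mu(s_0,g) \geq 2$.

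Next I would apply the stability characterization of Definition \ref{psDef} to the pair $(s_0,t)$ for each $t\in\mu(g)$. Because $w(s_0,t)\leq 1 < 2 \leq \alpha_\mu(s_0,g)$, both Condition~1 ($\alpha_\mu(s_0,g)<w(s_0,t)$) and Condition~3 ($\alpha_\mu(s_0,g)=w(s_0,t)$) are impossible. Stability therefore forces Condition~2 to hold for every such pair, namely $\alpha_\mu(t,h) < w(s_0,t) \leq 1$; invoking integrality once more gives $\alpha_\mu(t,h)\leq 0$ for all $t\in\mu(g)$.

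The final step is to expand this inequality and sum over $\mu(g)$. For $t\in\mu(g)$ we have $\mu(t)=g$ and $\mu^2(t)=\mu(g)\setminus\{t\}$, so $\alpha_\mu(t,h)\leq 0$ reads
$$D_h - D_g + \sum_{x\in\mu(h)} w(t,x) - \sum_{x\in\mu(g)} w(t,x) \leq 0,$$
which rearranges to $\sum_{x\in\mu(h)} w(t,x) \leq (D_g-D_h) + \sum_{x\in\mu(g)} w(t,x)$. Summing over all $t\in\mu(g)$, of which there are $q_g$ since the quota is exactly met, the left-hand side becomes $E_{gh}$, the constant term becomes $q_g(D_g-D_h)$, and $\sum_{t\in\mu(g)}\sum_{x\in\mu(g)} w(t,x) = 2E_{gg}$ by the definition of $E_{gg}$, yielding exactly $E_{gh}\leq q_g(D_g-D_h)+2E_{gg}$.

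I expect the only real (if modest) obstacle to be the two integrality arguments: the jump from $\alpha_\mu(s_0,g)>1$ to $\geq 2$, and from $\alpha_\mu(t,h)<1$ to $\leq 0$. Both rely crucially on the present lemma's standing assumptions that $w\in\{0,1\}$ and $D_h\in\mathbb{Z}^+\cup\{0\}$; without them one could not rule out Conditions~1 and~3 for the witness, and the per-student bound would survive only up to an additive slack of roughly $q_{max}w_{max}$ — which is precisely the extra term appearing in the weighted analogue used for Theorem \ref{thm:PoAGeneral}.
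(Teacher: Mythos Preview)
Your argument is correct and matches the paper's proof essentially line for line: use the witness $s_0$ and stability to rule out Conditions~1 and~3 for every $t\in\mu(g)$, deduce $\alpha_\mu(t,h)<w(s_0,t)\leq 1$ and hence $\alpha_\mu(t,h)\leq 0$ by integrality, then sum over $\mu(g)$ and identify $E_{gh}$ and $2E_{gg}$. The only cosmetic difference is that the paper dispatches Conditions~1 and~3 directly from $\alpha_\mu(s,g)>1\geq w(s,t)$ without first upgrading to $\alpha_\mu(s_0,g)\geq 2$, but the content is identical.
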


\begin{proof}
Since $\mu$ is stable, then for all $t\in \mu(g)$, $(s,t)$ must satisfy at least one of the three conditions stated in the definition of exchange stability (Definition \ref{psDef}). However, for all $t\in\mu(g)$,
\begin{equation*}
\alpha_\mu(s,g) > 1 \geq w(s,t).
\end{equation*}
Thus, $(s,t)$ cannot satisfy conditions 1 or 3. Therefore, it must satisfy condition 2, which implies that for all $t\in\mu(g)$
\begin{equation*}
\alpha_\mu(t,h) < w(s,t) \leq 1.
\end{equation*}
Since $D_h,w(s,t) \in \mathbb{Z}^+\cup\{0\}$ we have that $\alpha_\mu(t,h)\in\mathbb{Z}^+\cup\{0\}$, and so
\begin{equation*} \alpha_\mu(t,h) < 1 \implies \alpha_\mu(t,h) \leq 0, \;\forall\; t \in \mu(g).
\end{equation*}
Summing over all $t \in \mu(g)$ gives
\begin{equation*} \sum_{t \in \mu(g)}\alpha_\mu(t,h) \leq 0. \end{equation*}
Using the definition of $\alpha$, we have
\begin{equation*}
\sum_{t \in \mu(g)}{\left(D_h - D_g + \sum_{x \in \mu(h)}w(t,x) - \sum_{x \in \mu(g)}w(t,x)\right)} \leq 0.  \end{equation*}
Simplifying the above yields
\begin{equation*}
q_g(D_h - D_g) + E_{gh} - 2E_{gg} \leq 0,
\end{equation*}
from which the desired bound follows. \end{proof}

%
%

\begin{lemma} \label{E_hg_3}
Let $w(s,t) \in \left\{0,1\right\}$ for all students $s,t$, and let $D_h \in \mathbb{Z}^+\cup\{0\}$ for all houses $h$.  Let $\mu$ be a stable matching and let $q_h = q \geq 2$ and/or $D_h = D \in \mathbb{Z}^+\cup\{0\}$ for all $h$. If (i) there does not exist an $s \in \mu(h)$ such that $\alpha_\mu(s,g) > 1$ and (ii) there does not exists a $\in \mu(g)$ such that $\alpha_\mu(t,h) > 1$,  then
\begin{equation*} E_{hg} \leq  \max(q_h(D_h-D_g), q_g(D_g-D_h))
 + 2 (E_{hh} + E_{gg})
 \end{equation*}
\end{lemma}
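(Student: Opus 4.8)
The plan is to exploit integrality together with the per-pair reformulation of stability in Definition \ref{psDef}. Since $D_h,w\in\mathbb{Z}^+\cup\{0\}$, the quantity $\alpha_\mu$ is integer-valued, so hypotheses (i) and (ii) give $\alpha_\mu(s,g)\le 1$ for every $s\in\mu(h)$ and $\alpha_\mu(t,h)\le 1$ for every $t\in\mu(g)$. Summing the second family over $\mu(g)$, exactly as in the proof of Lemma \ref{E_hg_1} and using $\sum_{t\in\mu(g)}\sum_{x\in\mu(h)}w(t,x)=E_{hg}$ and $\sum_{t\in\mu(g)}\sum_{x\in\mu(g)}w(t,x)=2E_{gg}$, yields $E_{hg}\le 2E_{gg}+q_g\bigl(1-(D_h-D_g)\bigr)$, and symmetrically $E_{hg}\le 2E_{hh}+q_h\bigl(1-(D_g-D_h)\bigr)$. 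Each of these differs from the target only by an additive slack of size $q$, and the whole proof is about removing that slack.

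First I would dispose of the case $D_h\neq D_g$ (which can occur only in the $q_h=q$ branch of the hypothesis). By integrality $|D_h-D_g|\ge 1$; assuming w.l.o.g. $D_h>D_g$, since the statement is symmetric under interchanging the two houses, the bound $E_{hg}\le 2E_{gg}+q_g(1-(D_h-D_g))$ collapses to $E_{hg}\le 2E_{gg}$, which is clearly $\le\max(q_h(D_h-D_g),q_g(D_g-D_h))+2(E_{hh}+E_{gg})$. So the slack is harmless once the desirabilities differ, and it remains only to treat $D_h=D_g$ (the only case at all under the hypothesis $D_h=D$ for all $h$).

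The core of the proof is $D_h=D_g$, where $\alpha_\mu(s,g)=a_s-b_s$ with $a_s$ and $b_s$ the numbers of edges from $s$ into $\mu(g)$ and within $\mu(h)$, and analogously $c_t,d_t$ for $t\in\mu(g)$, so that $E_{hg}=\sum_s a_s=\sum_t c_t$, $2E_{hh}=\sum_s b_s$, $2E_{gg}=\sum_t d_t$. Here the $+q$ slack must be killed using stability beyond (i) and (ii). The key structural fact, read off from Definition \ref{psDef} applied to a non-adjacent pair, is: if $s$ is \emph{tight}, i.e. $\alpha_\mu(s,g)=1$, then every $t\in N:=\{\,t\in\mu(g):\alpha_\mu(t,h)\ge 0\,\}$ must have $w(s,t)=1$, since otherwise $s$ strictly gains and $t$ weakly gains by swapping, so $(s,t)$ would block. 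Writing $P=\{\,s\in\mu(h):\alpha_\mu(s,g)=1\,\}$, each tight $s$ therefore satisfies $a_s\ge|N|$ and hence carries $b_s\ge|N|-1$ internal edges. Splitting the cross-edge count by $N$ and its complement, where $c_t\le d_t+1$ on $N$ and $c_t\le d_t-1$ off $N$, produces the refined bound $E_{hg}\le 2E_{gg}+2|N|-q_g$.

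It then remains to absorb $2|N|-q_g$ into $2E_{hh}$. If $P=\emptyset$ then $\alpha_\mu(s,g)\le 0$ for all $s$ and $E_{hg}\le 2E_{hh}$ outright; if $P\neq\emptyset$ and $|N|\le q_g-1$, a single tight student already gives $2E_{hh}\ge|N|-1\ge 2|N|-q_g$, closing the refined bound. The delicate residual configuration — and the main obstacle — is $|N|=q_g$ together with its mirror $|N'|=q_h$ (where $N':=\{\,s\in\mu(h):\alpha_\mu(s,g)\ge 0\,\}$), i.e. every opposite student is non-negative, in which the counting is off by exactly one. I would close this using the dual identities $E_{hg}=2E_{hh}+|P|=2E_{gg}+|P'|$ (valid precisely because all $\alpha$-values are non-negative), combined with $|P|\le 2E_{hh}$ and $|P'|\le 2E_{gg}$: in this configuration each tight student is adjacent to \emph{all} $q\ge 2$ opposite students and so carries at least $q-1\ge 1$ internal edges, giving $2E_{hh}\ge|P|(q_g-1)\ge|P|$ and symmetrically $2E_{gg}\ge|P'|$. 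Hence $E_{hg}\le 4E_{hh}$ and $E_{hg}\le 4E_{gg}$, and adding these yields $E_{hg}\le 2(E_{hh}+E_{gg})$. The constraint $q_h,q_g\ge 2$ from Theorem \ref{thm:PoASimple} is exactly what rescues this off-by-one, and essentially all of the difficulty is concentrated in this degenerate all-non-negative case.
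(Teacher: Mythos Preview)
Your proof is correct. Both your argument and the paper's hinge on the same structural consequence of stability --- in your notation, that every tight student in $P$ is adjacent to every student in $N$ (the paper's version: $S_1$ is fully joined to $T_1\cup T_0$) --- but the packaging is genuinely different. The paper keeps $D_h,D_g$ general throughout, partitions each house into three $\alpha$-level sets $S_1,S_0,S_{-1}$ and $T_1,T_0,T_{-1}$, and reduces everything to the single combinatorial inequality
\[
S_1-S_{-1}\;\le\; S_1T_1+S_1T_0+S_0T_1 - T_1 + T_{-1},
\]
whose verification it omits as ``elementary'' (and which, as the paper notes, actually requires $q_h=q_g$; the $D_h=D$ branch is handled by a separate argument only sketched). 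You instead dispose of $D_h\neq D_g$ by integrality in one line and then attack the equal-desirability case directly, with the off-by-one isolated to the all-nonnegative configuration and closed using $2E_{hh}\ge|P|(q_g-1)\ge|P|$. Your route is more transparent about exactly where $q\ge2$ is consumed and avoids the omitted combinatorial lemma; the paper's route is more uniform in $D_h,D_g$ but hides the hard case inside the unproved inequality. One small presentational point: when you land in the residual case $|N|=q_g$, make explicit that you are invoking the \emph{symmetric} argument (with $P',N'$) to reduce to the simultaneous configuration $|N|=q_g$ and $|N'|=q_h$ --- the logic is there, but the phrase ``together with its mirror'' could be read as an additional assumption rather than a consequence of having exhausted the symmetric cases.
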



\begin{proof}
It follows from the assumptions in the theorem statement that the students in houses $h$ and $g$ can be partitioned into 6 sets based on their house and $\alpha$ values (either 1, 0, or negative), as shown in Figure \ref{fig:alphaPart}.

\begin{figure} [ht!] \centering
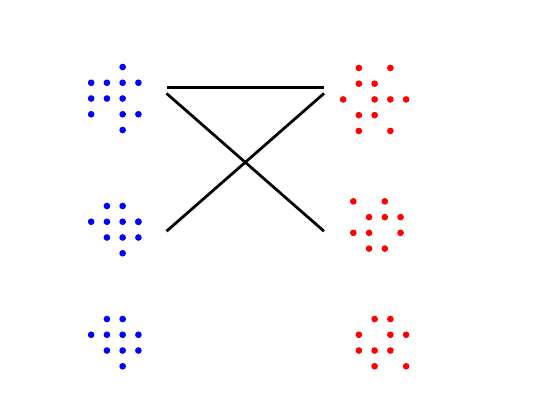
\caption{Partition of students based on $\alpha$ function} \label{fig:alphaPart} \end{figure}

Let $S_0$, $S_1$, and $S_{-1}$ denote the set of students in house $h$ such that $\alpha_\mu(s,g) = 0$, $\alpha_\mu(s,g) = 1$, and $\alpha_\mu(s,g) \leq -1$ respectively. For convenience, we use the same notation for the set and the number of students in the set, e.g., $|S_1| = S_1$  The same conventions apply to the $T$ variables and students in house $g$. Two sets are connected with a black line in Figure \ref{fig:alphaPart} if all students in one set must be connected to all students in the other set. These connections follow from the conditions of stability in Definition \ref{psDef}.  This gives us 3 constraints:
\begin{enumerate}
\item if $\alpha_\mu(s,g) = 1$ and $\alpha_\mu(t,h) = 1$ then $w(s,t) = 1$
\item if $\alpha_\mu(s,g) = 1$ and $\alpha_\mu(t,h) = 0$ then $w(s,t) = 1$
\item if $\alpha_\mu(s,g) = 0$ and $\alpha_\mu(t,h) = 1$ then $w(s,t) = 1$
\end{enumerate}

These constraints give us a lower bound on the edges between houses $h$ and $g$.
\begin{equation} \label{E_hg_spec_bound}
\sum_{t \in \mu(g)}\sum_{x \in \mu(h)}w(t,x) \geq S_1 T_1 + S_1 T_0 + S_0 T_1
\end{equation}

To prove the theorem, we want to find an upper bound on the cross edges, $E_{hg}$, so we relate the edges in the graph to the sum of the $\alpha$ values using the definition of the $\alpha$ function.
\begin{equation} \label{eq:alpha1}
\sum_{s \in \mu(h)}\alpha_\mu(s,g) =  q_h(D_g - D_h) + E_{hg} - 2E_{hh}
\end{equation}

Since the students in each house are partitioned by their $\alpha$ values, we can bound this sum as:
\begin{equation} \label{eq:alpha2}
\sum_{s \in \mu(h)}\alpha_\mu(s,g) \leq S_1 - S_{-1}
\end{equation}
Combining \eqref{eq:alpha1} and \eqref{eq:alpha2} gives
\begin{equation} \label{E_hg_inprogress}
E_{hg} \leq q_h(D_h - D_g) + 2E_{hh} + S_1 - S_{-1}
\end{equation}

To continue, we need to find an upper bound on the quantity $S_1 - S_{-1}$.  To do this, we start by lower bounding $E_{gg}$.
\begin{equation*}
2E_{gg} = \sum_{t \in \mu(g)}\sum_{x \in \mu(g)}w(t,x)
\end{equation*}
Recalling the definition of $\alpha$ in \eqref{alpha} gives
\begin{equation*}
\sum_{x \in \mu(g)}w(t,x) = D_h - D_g + \sum_{x \in \mu(h)}w(t,x) - \alpha_\mu(t,h).
\end{equation*}
Combining the previous two equations yields
\begin{align*}
2E_{gg} =& \sum_{t \in \mu(g)}\left( D_h - D_g + \sum_{x \in \mu(h)}w(t,x) - \alpha_\mu(t,h) \right) \\
=& q_g(D_h - D_g) + \sum_{t \in \mu(g)}\sum_{x \in \mu(h)}w(t,x) - \sum_{t \in \mu(g)}\alpha_\mu(t,h).
\end{align*}
Using inequalities \eqref{E_hg_spec_bound} and \eqref{eq:alpha2} gives
\begin{equation} \label{mid_ineq}
S_1 T_1 + S_1 T_0 + S_0 T_1 - (T_1 - T_{-1}) \leq 2E_{gg} + q_g (D_g - D_h)
\end{equation}

We can now use the above to find an upper bound on $S_1 - S_{-1}$.  To do this, we relate the left hand side of the above inequality to $S_1 - S_{-1}$.

Specifically, let $f(S_1,S_0,S_{-1},T_1,T_0,T_{-1}) = S_1 T_1 + S_1 T_0 + S_0 T_1 - T_1 + T_{-1} - (S_1 - S_{-1})$.  It is possible to show using elementary techniques that this function is non-negative, and thus that
\begin{equation} \label{S1Bound}
 S_1 - S_{-1} \leq S_1 T_1 + S_1 T_0 + S_0 T_1 - T_1 + T_{-1}
\end{equation}
We omit the details for brevity.  Note, however that the inequality in \eqref{S1Bound} holds only for the case where $q_h = q$ for all $h \in H$.  In the case where the quotas are not equal but $D_h = D$ for all $h \in H$, the proof technique differs slightly, but still yields $S_1 - S_{-1} \leq 2 E_{gg}$, from which the result follows.

Finally, combining (\ref{mid_ineq}) and (\ref{S1Bound}) gives
\begin{align*}
S_1 - S_{-1} & \leq S_1 T_1 + S_1 T_0 + S_0 T_1 - T_1 + T_{-1} \\
& \leq 2E_{gg} + q_g(D_g - D_h)
\end{align*}

To complete the proof we now plug the above into (\ref{E_hg_inprogress}) to get
\begin{align*}
E_{hg} \leq& q_h(D_h - D_g) + 2E_{hh} + 2E_{gg} + q_g(D_g - D_h) \\
\leq& \max (q_h(D_h - D_g), q_g(D_g - D_h)) + 2(E_{hh} + E_{gg}).
\end{align*}
where the final step follows from noting that at most one of $D_h-D_g$ and $D_g-D_h$ is strictly positive.
\end{proof}

%


\begin{lemma} \label{m-1 bound}
\begin{equation*}
\frac{\sum_{g<h \in H}q_h(D_h - D_g)}{\sum_{h \in H}q_h D_h} \leq m-1
\end{equation*}
\end{lemma}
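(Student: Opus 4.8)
The plan is to prove the equivalent statement $\sum_{g<h} q_h(D_h - D_g) \le (m-1)\sum_{h\in H} q_h D_h$ directly, after which the claimed inequality follows by dividing through by the denominator; note that the denominator is strictly positive whenever the statement is nontrivial, since every $q_h \ge 1$ and each $D_h \ge 0$, so $\sum_h q_h D_h = 0$ forces all $D_h = 0$ and makes both sides vanish.

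First I would regroup the double sum by the \emph{larger} house in each pair. Labeling the houses $1,\dots,m$ so that $D_1 \le \cdots \le D_m$ (as the surrounding text assumes), collecting the terms $q_h(D_h - D_g)$ according to $h$ gives
\begin{equation*}
\sum_{g<h} q_h(D_h - D_g) = \sum_{h=1}^m q_h \sum_{g<h} (D_h - D_g).
\end{equation*}
The key step is then the elementary bound $D_h - D_g \le D_h$, which holds simply because the valuations are nonnegative, i.e. $D_g \ge 0$. Since each house $h$ is the larger element of at most $m-1$ pairs, the inner sum obeys $\sum_{g<h}(D_h - D_g) \le (m-1)D_h$, and substituting this yields
\begin{equation*}
\sum_{g<h} q_h(D_h - D_g) \le (m-1)\sum_{h=1}^m q_h D_h,
\end{equation*}
which is exactly what is needed.

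There is essentially no serious obstacle here: the only things worth flagging are that the ordering hypothesis $D_g < D_h$ is not actually used (I rely only on $D_g \ge 0$ and the counting of pairs), and that one must implicitly assume $\sum_h q_h D_h > 0$ for the ratio to be well defined. I would also remark that the constant $m-1$ is best possible, as seen by taking $D_1 = \cdots = D_{m-1} = 0$, $D_m = 1$, and all $q_h = 1$, for which the numerator equals $m-1$ and the denominator equals $1$; this confirms the bound cannot be improved and justifies its use in the proofs of Theorems \ref{thm:PoASimple} and \ref{thm:PoAGeneral}.
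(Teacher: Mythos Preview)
Your proof is correct and follows essentially the same approach as the paper: both arguments drop the nonnegative term $D_g$ to bound $q_h(D_h - D_g) \le q_h D_h$ and then use that each house participates in at most $m-1$ pairs. The paper's version takes a slight detour by first bounding $q_h(D_h - D_g) \le q_h D_h + q_g D_g$ and then symmetrizing the double sum, but your direct regrouping by the larger index is, if anything, a bit cleaner.
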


\begin{proof}
Without loss of generality assume the houses are ordered so that if $g < h$, then $D_g < D_h$.  The following inequalities hold simply because $q_h, q_g, D_h, D_g$ are all non-negative values.
\begin{align*}
\frac{\sum_{g<h \in H}q_h(D_h - D_g)}{\sum_{h \in H}q_h D_h} \leq& \; \frac{\sum_{g<h \in H} (q_h D_h + q_g D_g)}{\sum_{h \in H}q_h D_h} \\
\leq& \frac{\sum_{h \in H} \sum_{g \neq h \in H} q_h D_h}{\sum_{h \in H}q_h D_h} \\
=& \; \frac{\sum_{h \in H} (m-1) q_h D_h}{\sum_{h \in H}q_h D_h} \\
=& \; m-1
\end{align*}
\end{proof}

%

The remaining lemmas parallel the above lemmas, but are used for proving Theorem \ref{thm:PoAGeneral}, and thus apply in more general settings.

\begin{lemma} \label{E_hg_1_general}
Let $w(s,t) \in \mathbb{R}^+\cup\{0\}$ for all students $s,t$, and let $D_h \in \mathbb{R}^+\cup\{0\}$ for all $h \in H$.
Consider a stable matching $\mu$.  If there exists an  $s \in \mu(h)$ such that $\alpha_\mu(s,g) > w(s,t)$ for all $t \in \mu(g)$, then
$E_{gh} < q_g(D_g-D_h) + 2 E_{gg} + q_g w_{max}.$
\end{lemma}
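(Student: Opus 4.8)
The plan is to mirror the argument of Lemma \ref{E_hg_1}, replacing its integrality step with a direct bound by $w_{max}$. First I would fix the student $s \in \mu(h)$ supplied by the hypothesis, so that $\alpha_\mu(s,g) > w(s,t)$ for every $t \in \mu(g)$. Then, invoking exchange stability (Definition \ref{psDef}) for each pair $(s,t)$ with $t \in \mu(g)$, one of the three conditions must hold. Since $\alpha_\mu(s,g) > w(s,t)$ rules out both Condition 1 (which requires $\alpha_\mu(s,g) < w(s,t)$) and Condition 3 (which requires $\alpha_\mu(s,g) = w(s,t)$), Condition 2 must be the one that holds, giving $\alpha_\mu(t,h) < w(s,t)$ for every $t \in \mu(g)$.

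At this point the integer argument of Lemma \ref{E_hg_1} would conclude $\alpha_\mu(t,h) \leq 0$; here instead I would simply bound $w(s,t) \leq w_{max}$ to obtain $\alpha_\mu(t,h) < w_{max}$ for each $t \in \mu(g)$. Summing this strict inequality over the $q_g$ students in $\mu(g)$ (nonempty since quotas are exactly met and $q_g \geq 1$) yields $\sum_{t \in \mu(g)} \alpha_\mu(t,h) < q_g w_{max}$.

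Finally I would expand the left-hand side using the definition of $\alpha$ in \eqref{alpha}, with $\mu(t) = g$ and $\sum_{x \in \mu^2(t)} w(t,x) = \sum_{x \in \mu(g)} w(t,x)$ (self-weights being zero), giving
\begin{equation*}
\sum_{t \in \mu(g)} \alpha_\mu(t,h) = q_g(D_h - D_g) + E_{gh} - 2E_{gg},
\end{equation*}
since $\sum_{t \in \mu(g)}\sum_{x \in \mu(h)} w(t,x) = E_{gh}$ and $\sum_{t \in \mu(g)}\sum_{x \in \mu(g)} w(t,x) = 2E_{gg}$ by definition. Combining with the previous bound and rearranging gives $E_{gh} < q_g(D_g - D_h) + 2E_{gg} + q_g w_{max}$, as desired.

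The argument is essentially routine, and I do not expect a real obstacle; the only genuine content is recognizing that the loss of integrality — which in the unweighted case allowed one to sharpen $\alpha_\mu(t,h) < 1$ to $\alpha_\mu(t,h) \leq 0$ and thereby drop the final term — is precisely what forces the extra additive $q_g w_{max}$ in the bound. The minor points to check are that the strict inequality survives the summation (guaranteed since $q_g \geq 1$ and each summand is strict) and that the housemate sum $\sum_{x \in \mu^2(t)} w(t,x)$ coincides with $\sum_{x \in \mu(g)} w(t,x)$.
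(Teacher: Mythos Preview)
Your proposal is correct and follows essentially the same approach as the paper's proof: use the hypothesis to rule out Conditions 1 and 3 of exchange stability, forcing $\alpha_\mu(t,h) < w(s,t) \leq w_{max}$ for every $t \in \mu(g)$, then sum over $t$ and expand via the definition of $\alpha$. If anything, your treatment is slightly more careful than the paper's in verifying that the strict inequality survives the summation and in writing $w(s,t) \leq w_{max}$ rather than $w(s,t) < w_{max}$.
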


\begin{proof}
By assumption, there exists a student in $h$ that strictly wants to swap with any student in house $g$.  It then follows from the stability of $\mu$ that \emph{all} students in $g$ must strictly oppose the swap (i.e., $\alpha_\mu(t,h) < w(s,t)$).  This gives
\begin{align*} D_h - D_g+  \sum_{x \in \mu(h)} w(t,x) - \sum_{x \in \mu(g)} w(t,x) < \; w(s,t)
< \; w_{max},
\end{align*}
for all $t \in \mu(g).$
Summing the above equation over $t \in \mu(g)$ then yields
\begin{equation*}
q_g(D_h - D_g) + E_{gh} - 2E_{gg} < q_g w_{max}
\end{equation*}
Rearranging the previous equation completes the proof.
\end{proof}


\begin{lemma} \label{E_hg_3_general}
Let $w(s,t) \in \mathbb{R}^+\cup\{0\}$ for all students $s,t$, and let $D_h \in \mathbb{R}^+\cup\{0\}$ for all $h \in H$.
Consider a stable matching $\mu$.   If (i) there does not exist an $s \in \mu(h)$ such that $\alpha_\mu(s,g) > w(s,t)$ for all $t \in \mu(g)$ and (ii) there does not exist $t \in \mu(g)$ such that $\alpha_\mu(t,h) > w(s,t)$ for all $s \in \mu(h)$, then
\begin{align*} E_{hg} \leq & \max(q_h(D_h-D_g),q_g(D_g-D_h)) \\
& + 2 (E_{hh} + E_{gg}) + q_{max} w_{max} \end{align*}
\end{lemma}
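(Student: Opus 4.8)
The plan is to exploit the fact that the target bound carries the extra slack term $q_{max}w_{max}$, which makes the delicate combinatorial partition argument of the unweighted Lemma~\ref{E_hg_3} unnecessary: a single summation of the $\alpha$-values will do. The engine of the proof is the identity obtained, exactly as in \eqref{eq:alpha1}, by summing the definition \eqref{alpha} of $\alpha$ over all students assigned to house $h$,
\[
\sum_{s \in \mu(h)} \alpha_\mu(s,g) = q_h(D_g - D_h) + E_{hg} - 2E_{hh},
\]
together with its symmetric counterpart obtained by interchanging $h$ and $g$.

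First I would translate hypothesis (i). Its negation states that for every $s \in \mu(h)$ there is at least one $t \in \mu(g)$ with $\alpha_\mu(s,g) \le w(s,t)$, and since $w(s,t) \le w_{max}$ this yields the $s$-uniform bound $\alpha_\mu(s,g) \le w_{max}$ for every $s \in \mu(h)$ (note $\mu(g)\neq\emptyset$ since quotas are met). Summing over $\mu(h)$ gives $\sum_{s \in \mu(h)} \alpha_\mu(s,g) \le q_h w_{max}$, and substituting into the identity above and rearranging produces
\[
E_{hg} \le q_h(D_h - D_g) + 2E_{hh} + q_h w_{max}.
\]

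The final step is a sequence of trivial relaxations. Using $q_h \le q_{max}$, $E_{gg} \ge 0$, and the elementary fact $q_h(D_h-D_g) \le \max\bigl(q_h(D_h-D_g),\, q_g(D_g-D_h)\bigr)$, each term of the displayed bound is dominated by the corresponding term of the claimed inequality, which completes the argument. Hypothesis (ii) is the symmetric partner of (i): repeating the same reasoning with $h$ and $g$ exchanged gives the twin bound $E_{hg} = E_{gh} \le q_g(D_g-D_h) + 2E_{gg} + q_g w_{max}$, so each hypothesis already suffices on its own, and one may present the conclusion symmetrically by invoking whichever of the two bounds is smaller.

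There is essentially no technical obstacle here; the only point requiring care is the correct parsing of the quantifier in hypothesis (i)---in particular that negating ``$\exists s\,\forall t$'' produces an $s$-uniform bound $\alpha_\mu(s,g)\le w_{max}$ rather than a bound tied to a single fixed $t$. I expect the conceptually interesting feature to be the contrast with Lemma~\ref{E_hg_3}: there the \emph{absence} of a $q_{max}w_{max}$ slack forces the combinatorial partition of $\mu(h)\cup\mu(g)$ into the $S_i,T_j$ classes and the nonnegativity argument bounding $S_1-S_{-1}$, whereas here the weaker target absorbs exactly that combinatorial loss into the single additive term $q_{max}w_{max}$.
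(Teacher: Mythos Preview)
Your proposal is correct and follows essentially the same route as the paper: translate hypothesis (i) into $\alpha_\mu(s,g)\le w_{max}$ for every $s\in\mu(h)$, sum the $\alpha$-identity \eqref{eq:alpha1}, and relax to the stated bound. Your quantifier parsing is in fact cleaner than the paper's (which writes the negation as $\forall s\,\forall t$ rather than $\forall s\,\exists t$, a harmless slip once one passes to $w_{max}$), and your observation that either hypothesis alone already suffices is a mild sharpening the paper does not make explicit.
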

\begin{proof}
Conditions (i) and (ii) are equivalent to requiring
\begin{align*}
\forall s \in \mu(h), & \; D_g - D_h + \sum_{x \in \mu(g)} w(s,x) \\
& - \sum_{x \in \mu(h)} w(s,x) \leq w(s,t) \; \forall t \in \mu(g)
\end{align*}
and
\begin{align*}
\forall t \in \mu(g), & \; D_h - D_g + \sum_{x \in \mu(h)} w(t,x) \\
& - \sum_{x \in \mu(g)} w(t,x) \leq w(s,t) \; \forall s \in \mu(h)
\end{align*}
To complete the proof we simply sum these two bounds using $w(s,t) \leq w_{max}$ and $E_{gh} = E_{hg}$.
\end{proof} 

\end{document}